\documentclass[sigconf]{acmart}

\usepackage{booktabs}
\usepackage{algorithm}
\usepackage{algorithmic}
\usepackage{graphicx}
\usepackage[table]{xcolor}
\usepackage{pifont}
\usepackage{subcaption}
\usepackage{threeparttable}
\usepackage{amsmath,amsfonts}
\usepackage{listings}
\usepackage{xspace}
\usepackage{makecell}
\usepackage{enumitem}
\usepackage{colortbl}
\usepackage{multirow}
\usepackage{wasysym} 
\usepackage{threeparttable}

\definecolor{randcol}{RGB}{214,183,165}
\definecolor{kindonlycol}{RGB}{190,143,112}
\definecolor{kindsemcol}{RGB}{123,125,103}
\definecolor{fullfpcol}{RGB}{165,162,132}
\definecolor{theorycol}{RGB}{205,203,188}

\lstdefinestyle{json}{
    basicstyle=\ttfamily\small,
    breaklines=true,
    showstringspaces=false,
    frame=single
}

\newcommand{\cmark}{\textcolor{teal}{\ding{51}\xspace}}
\newcommand{\xmark}{\textcolor{red}{\ding{55}\xspace}}
\newcommand{\pmark}{{\color{orange!80!black}$\boldsymbol{\circ}$}\xspace}
\newcommand{\sysname}{\textsc{PlanTwin}\xspace}
\newcommand{\piext}{\Pi_{\text{extract}}}
\newcommand{\pired}{\Pi_{\text{redact}}}
\newcommand{\pigen}{\Pi_{\text{generalize}}}
\newcommand{\pisch}{\Pi_{\text{schema}}}
\newcommand{\piout}{\Pi_{\text{out}}}

\renewcommand\footnotetextcopyrightpermission[1]{} 
\begin{document}

\title{\sysname: Privacy-Preserving Planning Abstractions for Cloud-Assisted LLM Agents}

\author{Guangsheng Yu$^{1}$, Qin Wang$^{1,2}$, Rui Lang$^{1}$, Shuai Su$^{1}$, Xu Wang$^{1}$}
\affiliation{
\textit{$^1$University of Technology Sydney} $|$ \textit{$^2$CSIRO Data61} 
\country{Australia}
}

\begin{abstract}

Cloud-hosted large language models (LLMs) have become the de facto planners in agentic systems, coordinating tools and guiding execution over local environments. In many deployments, however, the environment being planned over is private, containing source code, files, credentials, and metadata that cannot be exposed to the cloud. Existing solutions address adjacent concerns, such as execution isolation, access control, or confidential inference, but they do not control what cloud planners observe during planning: within the permitted scope, \textit{raw environment state is still exposed}.

We introduce \sysname, a privacy-preserving architecture for cloud-assisted planning without exposing raw local context. The key idea is to project the real environment into a \textit{planning-oriented digital twin}: a schema-constrained and de-identified abstract graph that preserves planning-relevant structure while removing reconstructable details. The cloud planner operates solely on this sanitized twin through a bounded capability interface, while a local gatekeeper enforces safety policies and cumulative disclosure budgets. We further formalize the privacy--utility trade-off as a capability granularity problem, define architectural privacy goals using $(k,\delta)$-anonymity and $\epsilon$-unlinkability, and mitigate compositional leakage through multi-turn disclosure control.

We implement \sysname\ as middleware between local agents and cloud planners and evaluate it on 60 agentic tasks across ten domains with four cloud planners. \sysname\ achieves full sensitive-item non-disclosure (SND = 1.0) while maintaining planning quality close to full-context systems: three of four planners achieve PQS $> 0.79$, and the full pipeline incurs less than 2.2\% utility loss.

\end{abstract}

\keywords{LLM, Agents, Privacy, Planning, Information Disclosure, Cloud}

\maketitle


\section{Introduction}
\label{sec:intro}

Large language models are increasingly used as planners in agentic systems, where a cloud-hosted model coordinates tools, decomposes tasks, and guides execution. Production systems such as Claude Code~\cite{anthropic2025claudecode}, OpenAI Codex~\cite{openai2025codex}, Cursor, and GitHub Copilot illustrate this design point: a powerful remote model reasons over code, files, and project context, while a local runtime executes the resulting plan. In many deployments, however, the environment being planned over is private. It contains source files, identifiers, logs, credentials, and sensitive metadata that cannot be exposed to the cloud. This creates a fundamental systems challenge: \emph{how can a local edge agent benefit from cloud-scale planning intelligence without revealing the real underlying state?}

Current systems address this challenge only partially. Claude Code's sandboxing architecture~\cite{anthropic2025sandbox} enforces filesystem and network isolation at the OS level, preventing access outside approved directories or connections to unapproved servers. OpenAI Codex~\cite{openai2025codex} executes tasks in a cloud sandbox with internet disabled, confining execution to a preloaded repository snapshot. The Model Context Protocol (MCP)~\cite{anthropic2024mcp} standardizes tool interfaces through schema-constrained descriptions and capability whitelisting. These mechanisms improve execution integrity, constrain side effects, and reduce blast radius. Some also coarsely restrict what the model can reach. But within the approved scope, \emph{the cloud planner still sees raw context}. File contents, code structure, and project metadata inside the permitted boundary are transmitted directly as prompt context. Existing mechanisms mostly control what the agent can \emph{do}, and only coarsely what it can \emph{reach}; they do not control the \emph{abstraction level} of what the cloud planner observes once access is granted.

This exposes a missing design point. Existing approaches largely fall into four categories: (1)~\emph{full-context upload}~\cite{anthropic2025claudecode,openai2025codex}, where raw or lightly filtered context is sent to the cloud and privacy relies on trust and contractual guarantees; (2)~\emph{Personally Identifiable Information (PII) redaction}~\cite{presidio2025,aws2025strands}, which removes recognized tokens before transmission but misses domain-specific secrets and leaves structural metadata that can still support re-identification; (3)~\emph{fully local inference}~\cite{privategpt2025,ollama2025}, which avoids cloud exposure but is constrained by local resources and cannot match the planning capability of frontier models; and (4)~\emph{trusted execution environments (TEEs)}~\cite{nvidia2024cc,akgul2024ollm,intel2024tdx}, which protect data during cloud inference but require specialized hardware and do not answer the semantic question of how much information the planner actually needs.

The central question is therefore not only how to secure execution, but how little information a cloud planner can observe while still producing useful plans. Our premise is that high-level orchestration rarely requires full descriptive fidelity of the underlying environment. For many workflows, useful planning depends on object types, relations, risk levels, action affordances, and dependency constraints, rather than raw file contents or precise identifiers. This suggests a different systems interface: the local environment should be exposed not as a compressed replica of reality, but as a \emph{planning-orientated control surface}.

We present \sysname, a \textit{privacy-preserving digital twin}\footnote{We use the term \emph{digital twin} deliberately. The core pattern is that a local system exports a machine-reasonable surrogate to a remote decision-maker.} architecture that realizes this idea. The local agent projects the real environment into a schema-constrained, de-identified, and planning-sufficient abstract representation through a four-stage projection pipeline ($\piext \to \pired \to \pigen \to \pisch$). The cloud planner operates only on this sanitized twin and emits declarative plans over a bounded set of local capabilities. All raw-data access, policy enforcement, execution, and output sanitization remain local. A Policy Enforcement Gatekeeper mediates every interaction through disclosure budgeting, safety validation, and optional human approval, ensuring that the cloud assists with planning but never directly accesses the real local world.

We complement recent capability-based control for LLM agents \cite{shi2025progent,zhu2025miniscope,ji2026seagent}. Those systems regulate \emph{what actions the agent may execute}; we regulate \emph{what information the planner may observe}. The two are orthogonal and naturally composable, combining execution-time access control with planning-time privacy control.
From this perspective, the central design question becomes the \emph{granularity of the capability interface}, which governs the privacy--utility trade-off. If capabilities are too coarse, most reasoning collapses back to the local side and the cloud provides limited planning benefit. If capabilities are too fine-grained, the planner regains utility only through repeated interaction, increasing cumulative disclosure. The desirable operating point lies near the \emph{orchestration boundary}, where the cloud determines workflow logic and the edge performs content-sensitive inspection and enforcement \cite{ma2025sok}.

\smallskip
\noindent \textbf{Contributions.} We make the following contributions:

\begin{itemize}[nosep]
\item We formulate \emph{cloud-assisted planning without raw-context exposure} as a system problem and position it against existing approaches including sandboxing, access control, PII redaction, and confidential computing (\S\ref{sec:related}).

\item We propose a four-stage local projection pipeline that converts the real environment into a sanitized, schema-constrained abstract graph. Unlike token-level masking, this design is \emph{structurally bounded}: the cloud planner never receives free-form raw context (\S\ref{sec:approach}).

\item We introduce a declarative capability-based planning protocol and formalize the privacy-utility trade-off as a \emph{granularity problem}, showing how capability design governs both planning effectiveness and disclosure cost (\S\ref{sec:approach:cap}).

\item We design a local gatekeeper that enforces cumulative disclosure budgets during planning, limiting how much structural information can be revealed to the cloud. We further formalize privacy goals for planning-time context control, ensuring that planner-visible abstractions do not uniquely identify underlying artifacts ($(k,\delta)$-anonymity) and cannot be reliably linked across sessions ($\epsilon$-unlinkability) (\S\ref{sec:approach:disclosure}).

\item We implement \sysname as middleware between local agents and cloud planners and show, on 60 tasks across ten domains with nineteen method configurations and four cloud planners, that strong cloud planning can be retained without exposing raw local context (\S\ref{sec:eval}).
\end{itemize}

\section{Background and Related Work}
\label{sec:related}

Cloud-assisted agent systems typically separate \emph{planning} from \emph{execution}: a powerful remote model reasons over user context and produces plans, while a local runtime executes tool calls or generated code. This architecture motivated several independent lines of work, including execution sandboxing, capability and access control, privacy-preserving context handling, and agent security. We summarize the qualitative design space in Table~\ref{tab:comparison}.

\begin{table}[t]
\centering
\caption{Privacy/security approaches for LLM agent systems.}
\label{tab:comparison}
\begin{threeparttable}
\resizebox{0.9\linewidth}{!}{%
\begin{tabular}{c|ccccc}
\cmidrule{2-6}
\multicolumn{1}{c}{\textbf{Approach}} & \ding{172} & \ding{173} &\ding{174} & \ding{175} & \ding{176} \\
\midrule
Claude Code Sandboxing~\cite{anthropic2025sandbox} & \cmark & \xmark & \xmark & \xmark & \cmark \\
OpenAI Codex Sandbox~\cite{openai2025codex} & \cmark & \xmark & \xmark & \xmark & \cmark \\
MCP Specification~\cite{anthropic2024mcp} & \cmark & \pmark & \cmark & \xmark & \cmark \\
E2B / Firecracker VMs~\cite{e2b2025sandbox} & \cmark & \xmark & \xmark & \xmark & \cmark \\
Progent~\cite{shi2025progent} & \cmark & \xmark & \xmark & \xmark & \cmark \\
MiniScope~\cite{zhu2025miniscope} & \cmark & \xmark & \xmark & \xmark & \cmark \\
SEAgent~\cite{ji2026seagent} & \cmark & \xmark & \xmark & \xmark & \cmark \\
PII Redaction (regex)~\cite{presidio2025} & \xmark & \pmark & \xmark & \xmark & \cmark \\
TEE (H100 CC, Intel TDX)~\cite{nvidia2023cc,intel2026tdx} & \cmark & \cmark & \xmark & \xmark & \xmark \\
Local-only~\cite{ollama2025,privategpt2025} & \cmark & \cmark & N/A & N/A & \cmark \\
\midrule
\rowcolor{kindsemcol!10}
\multicolumn{1}{c}{\textbf{\sysname (Ours)}} & \cmark & \cmark & \cmark & \cmark & \cmark \\
\cmidrule{2-6}
\end{tabular}%
}
\begin{tablenotes}
    \item  \ding{172} Controls agent actions,  \ding{173} Controls planner observations   
    \item  \ding{174}  Schema-bounded context  \ding{175}  Multi-turn disclosure  
    \item  \ding{176}  No specialized HW
\end{tablenotes}
\end{threeparttable}
\vspace{-0.1in}
\end{table}

\subsection{Agent Architecture and Execution Isolation}
\label{sec:related:sandbox}

Modern agentic coding tools have largely converged on a common architecture in which a cloud LLM reasons over code, files, and project context, while a local or containerized runtime executes the resulting plan. In this setting, the dominant systems concern has been \emph{execution isolation}.

Claude Code~\cite{anthropic2025sandbox} implements OS-level sandboxing using platform-specific primitives such as \texttt{sandbox-exec} on macOS and \texttt{seccomp} with \texttt{landlock} on Linux, providing filesystem and network isolation. OpenAI Codex~\cite{openai2025codex} executes tasks in a cloud sandbox over a preloaded repository snapshot, with internet access disabled. E2B~\cite{e2b2025sandbox} provides Firecracker microVM-based sandboxes with fast startup, and Agent Sandbox for Kubernetes~\cite{skypilot2025sandbox} relies on gVisor-based userland kernels with optional Kata Containers support.

These systems improve execution integrity and reduce blast radius once a plan is issued. However, their main protection boundary is \emph{what the agent can do}, not \emph{what the planner can observe}. Within the permitted scope, the cloud planner still commonly sees raw files, code semantics, and project metadata.

\subsection{Capability Interfaces and Access Control}
\label{sec:related:access}

A second line of work addresses \emph{authorization}: what tools and resources an agent may access. Progent~\cite{shi2025progent} introduces a domain-specific policy language for tool-level privilege control. MiniScope~\cite{zhu2025miniscope} enforces least-privilege authorization by reconstructing permission hierarchies with low runtime overhead. SEAgent~\cite{ji2026seagent} proposes mandatory access control through attribute-based policies and information-flow constraints. Li et al.~\cite{li2025visionaccesscontrolllmbased} further argue for richer policy models that can capture the semantic ambiguity of human--LLM interaction.

MCP~\cite{anthropic2024mcp} fits into this broader direction by standardizing schema-constrained tool descriptions and capability declarations. In this sense, it provides a partial \emph{capability interface}. However, MCP does not itself constrain the \emph{context} accompanying tool use: servers may still expose arbitrary raw resources, and planner-visible state remains largely unconstrained.

Recent studies show that schema-constrained tool interfaces alone do not prevent information leakage~\cite{zhao2025mindserversystematicstudy, zhang2025agentsecuritybenchasb}. Reported attacks include tool shadowing by malicious MCP servers~\cite{snyk2026agentscan}, Log-To-Leak~\cite{hu2026logtoleak}, which induces agents to invoke attacker-controlled logging tools, and MCP-SafetyBench~\cite{zong2026mcpsafetybench}, which reports substantial task compromise rates.  Prior evidence further suggests that models may still infer or propagate sensitive state through contextual interactions and tool outputs~\cite{wang2025your}. API- or schema-level constraints are insufficient. More broadly, representation-level transformations can preserve functionality while still changing observable signals and enabling downstream inference, so abstraction alone does not guarantee observation privacy~\cite{jiang2026why}. The distinction between controlling \emph{actions} and controlling \emph{observations} is explicit.

Our work targets the latter: it constrains what the planner sees before execution, which is orthogonal to and composable with execution-time access control, even when capabilities are exposed to the planner as high-level \textit{skill}-like operations~\cite{jiang2026sok}.

\subsection{Privacy-Preserving Context Mechanisms}
\label{sec:related:privacy}

A third line of work seeks to reduce privacy exposure in the context provided to LLMs~\cite{ma2025sok}.

\textit{PII redaction.}
Production systems such as Microsoft Presidio~\cite{presidio2025} and Strands Agents~\cite{aws2025strands} combine regex and NER to remove structured sensitive tokens. These approaches are useful for obvious PII, but they often miss domain-specific identifiers, secrets embedded in code or configuration files, and higher-order structural cues. They also typically operate on a per-request basis and do not track cumulative disclosure across turns.

\textit{Differential privacy.}
Differential privacy~\cite{dwork2006dp,abadi2016dpsgd} provides a principled way to bound information leakage through calibrated noise. However, applying DP directly to free-text prompts or rich project context remains difficult because of the high dimensionality and semantic sensitivity of natural language. In practice, DP is more naturally applied to training or aggregate analytics than to planning-time context exchange.

\textit{Embedding leakage.}
Prior work shows that learned representations can themselves leak sensitive information. Song and Raghunathan~\cite{song2020information} study attribute leakage from embeddings, while Morris et al.~\cite{morris2023text} and Li et al.~\cite{li2023sentence} demonstrate approximate text or sentence reconstruction. These results motivate our decision to keep retrieval and raw-context handling on the trusted local side rather than exposing vectorized surrogates to the cloud.

\textit{Confidential computing.}
TEE-based LLM inference~\cite{nvidia2023cc,ben2025distilled,intel2024tdx} protects data during cloud computation through hardware-enforced isolation. This addresses data-in-use protection against the cloud infrastructure, but not the semantic question central to our setting: how much information the planner should observe in the first place.

\textit{Local-only inference.}
Systems such as PrivateGPT~\cite{privategpt2025}, Ollama~\cite{ollama2025}, and related local-serving frameworks avoid cloud exposure entirely, but they also give up the planning capability of frontier cloud models, especially for long-horizon or multi-step reasoning tasks.

\smallskip
These approaches expose a missing design point. Existing work either keeps strong planning by revealing raw context, or preserves privacy by weakening or relocating the planner. Our goal is instead to retain strong remote planning while restricting the planner to a bounded, sanitized abstraction of the local environment.

\subsection{Agent Security Threats and Design Gap}
\label{sec:related:threats}

A growing body of work documents the security risks of LLM agents. A comprehensive SoK~\cite{peng2026sokpi} synthesizing 78 studies reports prompt injection success rates exceeding 85\%. Hung et al.~\cite{hung2026picode} analyze vulnerabilities specific to coding agents, including attacks mediated through tool ecosystems such as MCP. Real-world incidents include CVE-2025-53773~\cite{cve2025copilot}, a critical GitHub Copilot vulnerability, as well as prompt-injection attacks against cloud-based agent systems exploiting whitelisted integrations.

These attacks are amplified when the planner directly consumes raw local content. Once arbitrary files, logs, and metadata are admitted into the planning context, malicious instructions embedded in local artifacts can influence remote reasoning, and sensitive information can be exposed even if execution is later sandboxed. This observation motivates the design point of \sysname. Rather than relying solely on execution isolation, access control, or token-level redaction, we constrain the \emph{planner-visible observation surface} itself. The cloud planner reasons over a schema-constrained digital twin rather than raw local artifacts, reducing both privacy exposure and the attack surface associated with planning over untrusted content.

\section{System Model}
\label{sec:system}

We define the system boundary, threat model, and formal objective of \sysname. Table~\ref{tab:notation} summarizes notations used throughout.

\subsection{System Overview}
\label{sec:system:overview}

\begin{figure*}[t]
\centering
\includegraphics[width=0.9\textwidth]{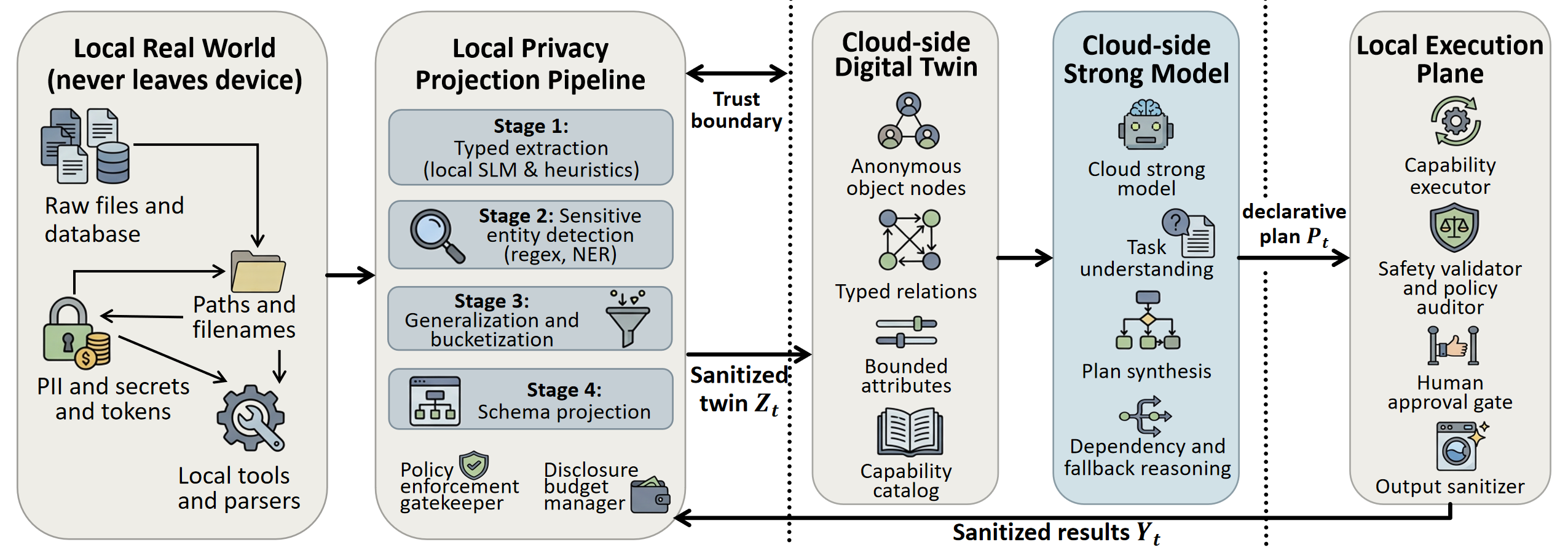}
\caption{End-to-end architecture of \sysname. The trusted local edge layer (left) transforms raw context $X_t$ through a four-stage privacy projection pipeline into a sanitized digital twin $Z_t$, which crosses the trust boundary to the cloud planner (center-right). The planner reasons over the typed abstract graph and capability catalog to produce a declarative plan $P_t$. The trusted local execution plane (right) validates each step through a gatekeeper, executes on raw data, and sanitizes outputs via $\piout$ before returning results $Y_t$. The cloud never observes raw files, paths, credentials, or code.}
\label{fig:system-overview}
\end{figure*}

\sysname follows a split architecture in which planning is delegated to a strong cloud model, while raw-state access and enforcement remain local. As shown in Fig.~\ref{fig:system-overview}, the system consists of two asymmetric components.

\begin{itemize}[nosep]
\item \textit{Trusted local layer.} The local side has direct access to the real environment and is responsible for privacy projection, policy enforcement, execution, and output sanitization. It may use deterministic scanners and optionally a small local language model to extract planning-relevant structure from raw artifacts.

\item \textit{Untrusted cloud planner.} The cloud side performs high-level reasoning and plan synthesis, but never accesses raw files, paths, credentials, or unsanitized execution outputs. Its view is restricted to the sanitized digital twin and the exposed capability interface.
\end{itemize}

The end-to-end information flow is
\begin{equation}
\label{eq:flow}
X_t \xrightarrow{\Pi} Z_t \xrightarrow{\Phi} P_t \xrightarrow{\Psi} Y_t,
\end{equation}
where $X_t$ denotes the real local environment, $\Pi$ the privacy projection pipeline, $Z_t$ the planner-visible digital twin, $\Phi$ cloud-side planning, $P_t$ the declarative plan emitted by the cloud planner, $\Psi$ local validation and execution, and $Y_t$ the sanitized output returned after local execution. This decomposition captures the central design choice of \sysname: the cloud plans over an abstract view, while all interaction with the real environment remains on the trusted local side. In this way, the cloud contributes planning capability without controlling what is revealed, executed, or returned.

\subsection{Threat Model}
\label{sec:system:threat}

We treat the cloud planner as untrusted with respect to privacy. Our primary model is \textit{honest-but-curious}: the planner follows the protocol for producing plans, but may attempt to infer sensitive information from everything it observes. We also consider a stronger \textit{semi-honest with auxiliary knowledge} adversary that may correlate multiple observations and combine them with external information.

Concretely, the adversary may observe all sanitized twin states $Z_{1:T}$ across turns, the sequence of requested capabilities and their abstract outputs, and timing or structural patterns arising during planning and execution. We also allow the adversary to combine these observations with auxiliary knowledge, such as public repositories, known project templates, or likely technology stacks.

The adversary does \emph{not} directly observe raw local files, metadata, secrets, or unsanitized outputs. The local layer, including the projection pipeline, gatekeeper, and output sanitizer, is part of the trusted computing base.

\smallskip
\noindent\textbf{Adversary goals.}
We focus on three privacy goals that capture the main ways a planner-visible abstraction may still leak information.

\textit{\underline{Goal 1}: Identity re-identification.}
Given transcript $Z_{1:T}$ and a target object $o^*$, an adversary outputs a guess $\hat{o}$ for the real-world identity $o^*$. We capture resistance to this attack via $(k,\delta)$-anonymity:
\begin{equation}
\label{eq:anonymity}
\Pr[\mathcal{A}(Z_{1:T}) = o^*] \leq \frac{1}{k} + \delta,
\end{equation}
where $k$ is the anonymity set size induced by schema projection. Appendix~\ref{app:proofs} formalizes the bound under assumptions on attribute distribution and disclosure budgeting.

\textit{\underline{Goal 2}: Cross-session linkage.}
Given transcripts from two sessions $Z^{(1)}_{1:T_1}$ and $Z^{(2)}_{1:T_2}$, the adversary determines whether the same real object appears in both. We capture resistance through $\epsilon$-unlinkability:
\begin{equation}
\label{eq:unlinkability}
\ln \frac{\Pr[\mathcal{A}(Z^{(1)}, Z^{(2)}) = 1 \mid \text{same}]}{\Pr[\mathcal{A}(Z^{(1)}, Z^{(2)}) = 1 \mid \text{diff}]} \leq \epsilon.
\end{equation}
where $\mathcal{A}$ is a linkage adversary, superscripts $(1)$ and $(2)$ index the two sessions, and $\text{same}$ (resp.\ $\text{diff}$) denotes the event that both transcripts contain the same (resp.\ distinct) real-world object.
Fresh random object identifiers, bounded vocabularies, and cumulative disclosure limits are intended to keep this quantity small. Appendix~\ref{app:proofs} connects unlinkability bound to disclosure budget.

\textit{\underline{Goal 3}: Structural inference.}
Even when content and identities are hidden, the adversary may infer sensitive properties such as project type, stack composition, or organizational structure from graph topology and capability patterns. This form of leakage is harder to characterize formally because structure itself is informative. We mitigate it through bounded edge vocabularies and controlled relational disclosure, but do not assume it can be eliminated entirely.

\smallskip
\noindent\textbf{Auxiliary knowledge.}
A realistic adversary may already know partial facts about the target environment. For example, knowing that the target is likely a Django project may make schema attributes such as \texttt{semantic\_class = authentication\_module} more identifying. Our goal is therefore not privacy against an omniscient adversary, but bounded exposure under realistic auxiliary knowledge and limited observation.

\subsection{Problem Formulation}
\label{sec:system:problem}

At time $t$, let $X_t$ denote the full local environment, including raw artifacts, metadata, logs, sensitive values, and interaction history. The local side does not expose $X_t$ directly. Instead, it applies the privacy projection pipeline $\Pi$ to construct a sanitized planner-visible state $Z_t = \Pi(X_t)$. Given $Z_t$ and capability catalog $\mathcal{C}$, the cloud planner produces a declarative plan $P_t = \Phi(Z_t, \mathcal{C})$, which the trusted local side then validates and executes under policy set $\Omega$ to produce the final outcome $Y_t = \Psi(P_t, X_t, \Omega)$.

The design objective is to retain as much planning utility as possible while keeping cumulative disclosure bounded:
\begin{equation}
\label{eq:objective}
\max \; \mathbb{U}(P_t, Y_t)
\quad \text{subject to} \quad
\mathcal{L}(Z_{1:t}, Y_{1:t}) \leq \epsilon.
\end{equation}
Here $\mathbb{U}$ denotes task utility and $\mathcal{L}$ measures cumulative exposure. In our design, disclosure is tracked per object through a budget $B(o)$, and $\mathcal{L}$ is conservatively taken as the maximum disclosure over all referenced objects, since over-exposing even a single object may already suffice for re-identification.

This formulation captures the core trade-off in \sysname: the planner should observe enough structure to generate useful plans, but not enough detail to reconstruct, identify, or reliably link the underlying local artifacts.

\begin{figure*}[t]
\centering
\includegraphics[width=\textwidth]{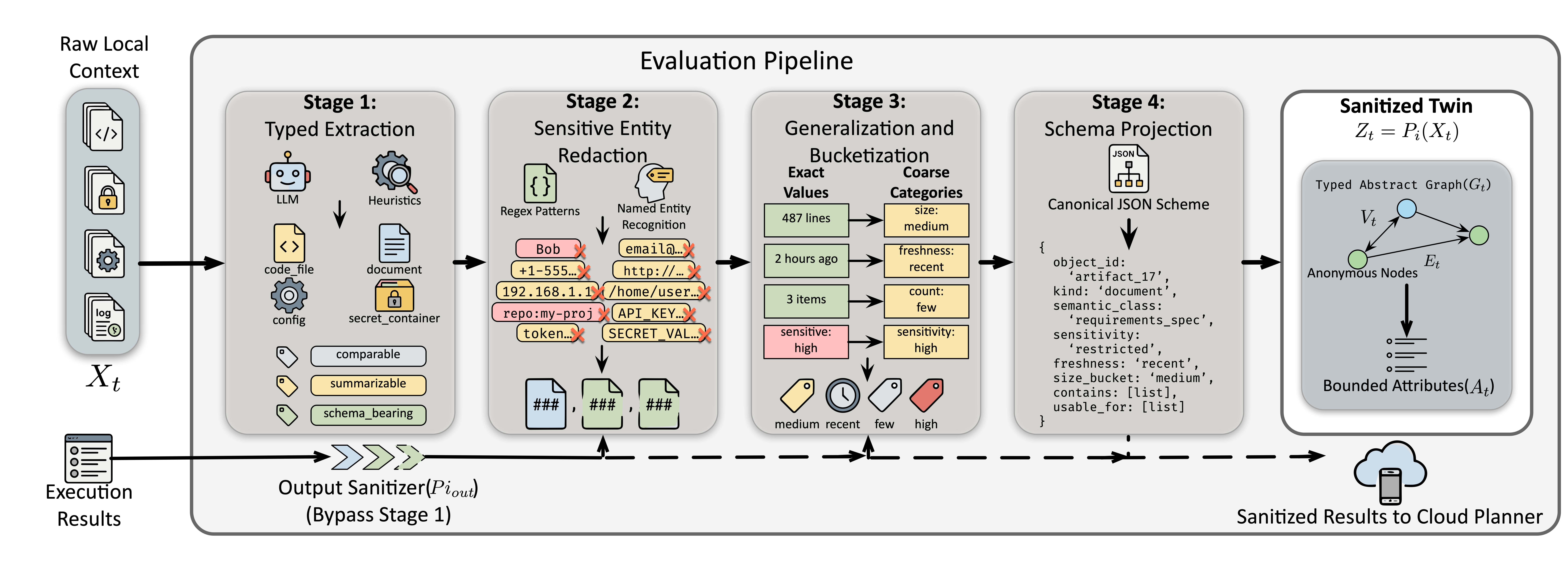}
\caption{The 4-stage privacy projection pipeline. Raw local context enters Stage~1 (typed extraction via local small language model (SLM) and heuristics), proceeds through Stage~2 (sensitive entity redaction via deterministic regex patterns with code-specific rules), Stage~3 (value generalization into bounded buckets), and Stage~4 (schema projection into fixed JSON). The output is a planning-sufficient digital twin $Z_t$ that contains no raw text, filenames, paths, or credentials. A companion output sanitizer $\piout$ applies Stages~2--4 to execution results before cloud exposure.}
\label{fig:projection-pipeline}
\end{figure*}

\section{Our Approach: \sysname}
\label{sec:approach}

\sysname is built around a simple idea: the cloud should help with \emph{planning}, but it should never need to observe the real local environment directly. To make this possible, the system combines three mechanisms. First, the local side converts raw context into a sanitized digital twin through a four-stage projection pipeline. Second, the cloud planner operates only through a bounded capability interface, so planning is expressed as declarative requests over abstract objects rather than direct access to local artifacts. Third, a local gatekeeper validates every step, enforces policy, and meters cumulative disclosure across turns.

These mechanisms separate \emph{reasoning} from \emph{exposure}. The cloud contributes planning intelligence over an abstract view, while the trusted local side retains control over raw-state access, execution, and what information is ultimately released.

\subsection{Privacy-Preserving Planning Abstraction}
\label{sec:approach:twin}

The first component of \sysname is a local projection pipeline that transforms the real environment into a planning-sufficient but reconstruction-resistant twin. Formally,
\begin{equation}
\label{eq:pipeline}
\Pi = \pisch \circ \pigen \circ \pired \circ \piext,
\end{equation}
that is, the local side first extracts structure, then redacts sensitive entities, then generalizes precise values, and finally projects the result into a fixed schema.

\smallskip
\noindent\textbf{Pipeline.} The role of this pipeline is not to summarize the environment as accurately as possible. Instead, it exposes only the structure needed for planning while suppressing content that would support reconstruction or re-identification.

\textit{\underline{Stage 1}: Typed extraction.}
The local extractor identifies object modality, coarse semantic role, and task affordances using a lightweight on-device small language model (SLM) with deterministic heuristics. Objects are mapped to coarse kinds such as \texttt{code\_file}, \texttt{log\_stream}, \texttt{structured\_record}, \texttt{config}, \texttt{secret\_container}, and \texttt{document}.
In parallel, the extractor assigns usability tags such as  \texttt{constraint\_source}, \texttt{requires\_local\_tool}, \texttt{schema\_bearing}, \\ \texttt{comparable} and \texttt{summarizable}. This stage determines what the object can be used for in planning, without yet deciding what may safely leave the trusted boundary.

\textit{\underline{Stage 2}: Sensitive entity detection and redaction.}
The second stage removes or masks obvious identifiers and secrets before any higher-level abstraction is exposed. This includes names, email addresses, phone numbers, URLs, IPs, hostnames, paths, repository identifiers, ticket IDs, keys, tokens, and other secrets. We use a deterministic detector combining regex rules for well-structured entities with keyword-based sensitivity classification for more contextual identifiers such as git metadata or internal hostnames.

\textit{\underline{Stage 3}: Generalization and bucketization.}
To reduce inversion risk, exact values are replaced by coarse categories. For example, byte sizes are mapped to \{small, medium, large\}, timestamps to \{today, recent, stale\}, counts to \{one, few, many\}, and sensitivity to \{low, restricted, high\}. This preserves planning-relevant distinctions while suppressing details that are unnecessarily identifying.

\underline{\textit{Stage 4}: Schema projection.}
Finally, the processed object is projected into a fixed JSON schema. A typical object takes the form:

\begin{lstlisting}[style=json]
{"object_id": "artifact_17",
 "kind": "document",
 "semantic_class": "requirements_spec",
 "sensitivity": "restricted",
 "freshness": "recent",
 "size_bucket": "medium",
 "contains": ["tables","dates","constraints"],
 "usable_for": ["compare","extract_constraints"],
 "confidence": 0.86}
\end{lstlisting}

This schema contains no filename, path, raw quote, or free-form text. The planner can reason over object roles and affordances, but not inspect the underlying content directly.

\textit{Output sanitization.}
The same principle also applies to execution results. We therefore define a companion output sanitizer $\piout$ that applies Stages~2--4 to local outputs before they are returned to the cloud. Stage~1 is omitted because result types are already known from the capability specification.

\textit{Security implication.}
Because raw local artifacts never enter the cloud-visible channel, the design also narrows the direct prompt-injection surface relative to raw-context systems. A remaining risk is that malicious local content may still influence Stage~1 extraction indirectly by causing the local SLM to misclassify objects. We mitigate this through deterministic cross-checks and by routing low-confidence extractions to review.

\smallskip
\noindent\textbf{Planning abstraction as a typed graph.}
\label{sec:approach:graph}
On the cloud side, the planner-visible abstraction is represented as a typed graph $G_t = (\mathcal{V}_t, \mathcal{E}_t, \mathcal{A}_t)$. Nodes in $\mathcal{V}_t$ correspond to anonymous artifacts (e.g., \texttt{artifact\_i}), edges in $\mathcal{E}_t$ encode coarse relations such as dependency, derivation, similarity, blocking, capability requirement, and conflict, and attributes in $\mathcal{A}_t$ capture bounded properties such as sensitivity, freshness, readiness, confidence, and usability tags. The output of $\Pi(X_t)$ is concretely this graph representation.

This graph view is important because the planner does not reason over isolated objects. It reasons over object roles, dependencies, and action affordances. The abstraction therefore preserves the relational structure needed for orchestration while withholding the raw artifacts from which that structure was derived.

\smallskip
\noindent\textbf{Running examples.}
\label{sec:approach:example}
Consider a coding-assistant task in which a developer asks: \emph{``Review the authentication module for security issues and compare it with the payment module.''}

In local environments, $X_t$ contains files such as \texttt{auth\_service.py} and \texttt{payment\_handler.py}, a \texttt{.env} file with live API keys, and git history containing developer identities. After projection, the twin $Z_t$ exposes only two code-file nodes with semantic classes such as ``authentication module'' and ``payment module,'' with a secret-container node and a small number of typed dependency edges. No filenames beyond these abstract roles, no developer identities, no API keys, and no git history cross the boundary.

The cloud planner therefore reasons over the abstract structure rather than the raw files. It may request a security audit of the two modules, compare the resulting findings, and generate an anonymized summary of the differences. These requests are then checked and executed locally on the real files.

\subsection{Capability-Based Planning}
\label{sec:approach:cap}

Once the planning abstraction has been constructed, the cloud planner no longer interacts with the local environment directly. Instead, it operates through a restricted capability interface $\mathcal{C} = \{c_1, \ldots, c_n\}$, where each capability is a safe typed operation such as constraint extraction, object comparison, issue classification, or security auditing. In this way, the planner reasons over abstract objects rather than raw files or unconstrained tool access.

The planner output is therefore not arbitrary code, but a structured action graph,
\begin{equation}
P_t = \{(s_i, c_i, I_i, O_i, \rho_i)\}_{i=1}^{K},
\end{equation}
where $s_i$ is a step identifier, $c_i$ a chosen capability, $I_i$ abstract inputs, $O_i$ the expected abstract output, and $\rho_i$ a policy annotation. This keeps planning declarative: the cloud specifies \emph{what operation should be performed}, while the trusted local side retains control over \emph{how} it is executed on raw data.

A central design choice is the granularity of this capability interface. If capabilities are too coarse, the local side must perform most of the reasoning itself and the cloud adds little value. If they are too fine-grained, the planner can recover utility only through repeated interaction, which increases cumulative disclosure. The privacy-utility trade-off is therefore governed not only by what capabilities exist, but by how much abstraction each capability exposes.

We capture this trade-off through $\text{grain}(\mathcal{C})$, the expected number of weighted schema fields revealed per capability call, using the disclosure model introduced in \S\ref{sec:approach:disclosure}. Let $\mathbb{U}(\mathcal{C})$ denote planning utility and $D(\mathcal{C},T)$ cumulative disclosure over $T$ turns. The resulting design objective is
\begin{equation}
\label{eq:granularity}
\mathcal{C}^* = \arg\max_{\mathcal{C}} \mathbb{U}(\mathcal{C})
\quad \text{s.t.} \quad
D(\mathcal{C},T) \leq \epsilon.
\end{equation}

In practice, the useful operating point lies near what we call the \emph{orchestration boundary}: capabilities should expose the workflow-level action to perform, but should not become a low-level channel for content inspection. Table~\ref{tab:granularity} illustrates this intuition.

\begin{table}[!]
\centering
\caption{Capability granularity trade-off.}
\label{tab:granularity}
\resizebox{\columnwidth}{!}{%
\begin{tabular}{c|l|c|c}
\toprule
\multicolumn{1}{c}{\textbf{Granularity}} &
\multicolumn{1}{c}{\textbf{Example}} &
\multicolumn{1}{c}{\textbf{Utility} $\uparrow$} &
\multicolumn{1}{c}{\textbf{Privacy Cost} $\downarrow$} \\
\midrule
Too coarse & \texttt{process\_project} & Low & Low \\
Orchestration & \texttt{security\_audit(art\_1)} & High & Medium \\
Too fine & \texttt{read\_line(art\_1, 42)} & High & High \\
\bottomrule
\end{tabular}%
}
\end{table}

\begin{table}[!]
\centering
\caption{Schema field disclosure costs.}
\label{tab:fieldcosts}
\resizebox{\columnwidth}{!}{%
\begin{tabular}{c|c|l}
\toprule
\multicolumn{1}{c}{\textbf{Field}} &
\multicolumn{1}{c}{$w_f$} &
\multicolumn{1}{c}{\textbf{Rationale}} \\
\midrule
\texttt{kind} & 0.5 & Low entropy, many objects share types \\
\texttt{semantic\_class} & 2.0 & Higher specificity \\
\texttt{contains} (per tag) & 1.0 & Content indicators \\
\texttt{usable\_for} (per tag) & 0.5 & Functional, less identifying \\
Typed edge (per edge) & 3.0 & Relational structure is highly identifying \\
\texttt{freshness} & 0.5 & Temporal bucket, coarse \\
\texttt{sensitivity} & 1.0 & Policy-relevant \\
\bottomrule
\end{tabular}
}
\end{table}

\subsection{Local Policy Enforcement}
\label{sec:approach:gatekeeper}

\begin{algorithm}[t]
\footnotesize
\caption{Gatekeeper Validation and Execution}
\label{alg:gatekeeper}
\begin{algorithmic}[1]
\REQUIRE Plan step $(s_i, c_i, I_i, O_i, \rho_i)$, context $X_t$, policy $\Omega$, budgets $\{B(o)\}$
\ENSURE Sanitized result $y_i$ or rejection signal
\IF{$c_i \notin \text{Allowlist}(\text{type}(I_i))$}
    \RETURN $\perp_{\text{reject}}$ \COMMENT{Capability not allowed}
\ENDIF
\FOR{each object $o \in I_i$}
    \STATE $\Delta \gets \text{EstimateCost}(c_i, o)$ \COMMENT{Weighted field cost}
    \IF{$B(o) + \Delta > \tau(o)$}
        \RETURN $\perp_{\text{reject}}$ \COMMENT{Budget exceeded}
    \ENDIF
\ENDFOR
\IF{$\rho_i$ requires \texttt{human\_approval}}
    \STATE $\text{approved} \gets \text{HumanGate}(s_i)$
    \IF{$\neg \text{approved}$}
        \RETURN $\perp_{\text{escalate}}$
    \ENDIF
\ENDIF
\STATE $r_i \gets \text{Execute}(c_i, I_i, X_t)$ \COMMENT{Local execution on raw data}
\STATE $y_i \gets \piout(r_i)$ \COMMENT{Sanitize output}
\FOR{each object $o \in I_i$}
    \STATE $B(o) \gets B(o) + \Delta_t(o)$ \COMMENT{Update budget}
\ENDFOR
\RETURN $y_i$
\end{algorithmic}
\end{algorithm}

\begin{algorithm}[t]
\footnotesize
\caption{\sysname: End-to-End Planning Cycle}
\label{alg:planning}
\begin{algorithmic}[1]
\REQUIRE Context $X_t$, projection $\Pi$, capabilities $\mathcal{C}$, policy $\Omega$, planner $\Phi$, thresholds $\{\tau(o)\}$
\ENSURE Sanitized result $Y_t$
\STATE Initialize $B(o) \gets 0$ for all tracked objects $o$
\FOR{each task request or planning cycle $t$}
    \STATE $X_t \gets \text{ObserveLocalState}()$
    \STATE $Z_t \gets \Pi(X_t)$ \COMMENT{4-stage projection: extract $\to$ redact $\to$ generalize $\to$ schema}
    \FOR{each object $o$ in $Z_t$}
        \IF{$B(o) + \Delta(o) > \tau(o)$}
            \STATE Suppress new fields or refuse disclosure
        \ENDIF
    \ENDFOR
    \STATE Send $Z_t, \mathcal{C}$ to cloud planner
    \STATE $P_t \gets \Phi(Z_t, \mathcal{C})$ \COMMENT{Receive declarative plan}
    \FOR{each step $(s_i, c_i, I_i, O_i, \rho_i)$ in $P_t$}
        \STATE $y_i \gets \text{GatekeeperValidateAndExecute}(s_i, X_t, \Omega)$ \COMMENT{Algorithm~\ref{alg:gatekeeper}}
        \IF{$y_i = \perp_{\text{reject}}$ \OR $y_i = \perp_{\text{escalate}}$}
            \STATE Handle rejection / escalation
        \ENDIF
    \ENDFOR
    \STATE $Y_t \gets \text{AggregateResults}(\{y_i\})$
    \STATE Log disclosure event to audit trail
\ENDFOR
\RETURN $Y_t$
\end{algorithmic}
\end{algorithm}

Cloud-generated plans are never executed directly. Every requested step is first mediated by a trusted local gatekeeper, which connects planning-time abstraction with execution-time enforcement. Its purpose is to ensure that each requested operation is valid for the referenced objects, consistent with local policy, and safe to execute on the real environment.

Operationally, the gatekeeper performs six checks in sequence: capability validation, policy validation, safety bounding, local execution, output sanitization, and escalation to a human when a step is ambiguous or high-risk. The full procedure is summarized in Algorithm~\ref{alg:gatekeeper}.

Formally, the local outcome is
\begin{equation}
\label{eq:gatekeeper}
Y_t =
\begin{cases}
\piout(\text{Exec}(P_t, X_t)) & \text{if policy-compliant},\\
\perp_{\text{reject}} & \text{if policy-violating},\\
\perp_{\text{escalate}} & \text{if ambiguous or high-risk}.
\end{cases}
\end{equation}

This boundary ensures that cloud planning remains advisory rather than authoritative. The cloud may propose an operation, but admissibility, execution over raw data, and any returned abstract result are all determined on the trusted local side. If a request is rejected or escalated, no new information is released to the cloud and $\Delta_t(o)=0$ for all involved objects 

This same boundary also limits plan-injection attempts. The cloud may request operations, but it cannot directly force raw-state disclosure or bypass local validation. Per-step controls such as capability whitelisting, output sanitization, and optional human approval bound what any single request may reveal, while cumulative controls such as disclosure budgeting and rate limits constrain what can be learned over repeated interaction.

\subsection{Multi-Turn Disclosure Control}
\label{sec:approach:disclosure}

Single-turn redaction is not sufficient in an interactive planning setting. Even when each individual response appears harmless, repeated queries may gradually reveal a distinctive fingerprint of an object. \sysname therefore treats disclosure as cumulative and tracks it across turns at the object level.

For each object $o$, cumulative disclosure is defined as
\begin{equation}
B(o) = \sum_{t=1}^{T} \Delta_t(o),
\end{equation}
where $\Delta_t(o)$ is the newly disclosed information at turn $t$. The gatekeeper checks every proposed release against an object-specific threshold $\tau(o)$ and blocks additional disclosure once the threshold would be exceeded.

\smallskip\noindent\textbf{Weighted field-cost model.}
We instantiate this mechanism with a simple weighted field-cost model in which each schema field $f$ is assigned a base disclosure weight $w_f$ reflecting its re-identification potential. Table~\ref{tab:fieldcosts} lists the default costs used in our prototype.

At turn $t$, the newly incurred disclosure is
\begin{equation}
\Delta_t(o) = \sum_{f \in \text{newly disclosed}} w_f.
\end{equation}
A new release is permitted only if
\begin{equation}
\label{eq:budget}
B(o) + \Delta_{T+1}(o) \leq \tau(o),
\end{equation}
where $\tau(o)$ may vary across object classes and sensitivity levels.

This mechanism is simple and operational. It is not presented as a full information-theoretic guarantee, but as a concrete local rule for limiting cumulative exposure in an interactive system.

\smallskip\noindent\textit{Advanced disclosure tracking.}
The weighted budget is only one possible instantiation. Stronger variants could quantify $\Delta_t(o)$ through entropy reduction, online $k$-anonymity monitoring, or differential privacy composition when individual disclosures satisfy suitable privacy guarantees. We mention these alternatives to clarify that the budgeting framework is general, even though the present implementation adopts a lightweight field-cost model.

\smallskip\noindent\textbf{Local retrieval and embedding exposure.}
All retrieval and similarity computation remain local. The cloud sees only abstract match outcomes or anonymized object identifiers, avoiding the embedding leakage and inversion risks demonstrated in prior work~\cite{song2020information,morris2023text,li2023sentence}.

\section{Evaluation}
\label{sec:eval}

We evaluate \sysname along several dimensions. We summarize our results in Table~\ref{tab:eval-roadmap}.

\begin{table}[!]
\centering
\caption{Evaluation roadmap.}
\label{tab:eval-roadmap}
\small
\renewcommand{\arraystretch}{1}
\vspace{-10pt}
\begin{tabular}{l|c|r}
\toprule
\multicolumn{1}{c}{\textbf{Evaluation dimensions}} & 
\multicolumn{1}{c}{\textbf{Ref.}} & 
\multicolumn{1}{c}{\textbf{Results}} \\
\midrule
Planning utility & \S\ref{sec:eval:utility} & Table~\ref{tab:main-results}, Fig.\ref{fig:privacy-utility} \\
Sensitive-item non-disclosure & \S\ref{sec:eval:utility} & Table~\ref{tab:main-results} \\
System overhead & \S\ref{sec:eval:overhead} & Table~\ref{tab:overhead} \\
Per-domain generalization & \S\ref{sec:eval:domain} & Table~\ref{tab:main-results} \\
Adversarial re-identification risk & \S\ref{sec:eval:reid} & Fig.\ref{fig:reid-strategies}, Fig.\ref{fig:reid-scaling} \\
Pipeline stage ablation & \S\ref{sec:eval:pipeline_ablation} & Fig.\ref{fig:ablation-pipeline} \\
Budget threshold sensitivity & \S\ref{sec:eval:budget_sensitivity} & Fig.\ref{fig:ablation-budget} \\
Multi-turn budget depletion & \S\ref{sec:eval:multiturn} & Fig.\ref{fig:ablation-multiturn-pqs}, Fig.\ref{fig:ablation-multiturn-nl} \\
Budget-footprint behavior & \S\ref{sec:eval:budget_analysis} & Table~\ref{tab:main-results} \\
\bottomrule
\end{tabular}
\end{table}

\subsection{Experimental Settings}
\label{sec:eval:setup}

\smallskip\noindent\textbf{Prototype.}
We implement \sysname as a Python middleware between a local agent runtime and a cloud LLM API. The system enforces a strict split between local projection and remote plan synthesis. By default, Stage~1 uses deterministic heuristic extraction for robustness and low latency, while a local Qwen3.5-family SLM~\cite{qwen2025qwen3} can optionally be used for richer semantic classification. Stage~2 uses regex-based entity detection with custom rules for code-specific secrets and identifiers, and Stages~3--4 are deterministic transformations. The disclosure budget manager maintains per-object state with per-field deduplication. Unless otherwise stated, all experiments use the default heuristic extraction pipeline.

\smallskip\noindent\textbf{Cloud planners.}
We evaluate four frontier cloud planners through OpenRouter, using the sanitized twin as the only planner-visible context: \textit{Kimi~K2.5}~\cite{kimik2_2025}, \textit{Gemini~3~Flash}, \textit{MiniMax~M2.5}~\cite{minimax2026m25}, and \textit{GLM~5}~\cite{glm2026glm5}. All planners are configured with high reasoning effort for consistency. All local processing, including projection and gatekeeper, runs on a single machine without GPU requirements.

\smallskip\noindent\textbf{Tasks.}
We construct a benchmark of 60 synthetic agentic tasks spanning ten domains: coding assistant, document review, multi-step debugging, DevOps, data pipelines, security incident response, ML operations, frontend development, database administration, and API integration. These tasks cover realistic agent workflows and embed sensitive artifacts such as PII, API keys, tokens, connection strings, internal hostnames, and infrastructure secrets.

\smallskip\noindent\textbf{Baselines.}
We compare \sysname against four baseline families. (i)~\textit{Raw Context} sends the full local environment to the cloud planner and serves as a utility upper bound without privacy protection. (ii)~\textit{PII Redaction} applies regex-based redaction before sending context to the cloud, but does not track cumulative disclosure across turns. (iii)~\textit{Local-Only} performs planning entirely with a local SLM without cloud assistance; we evaluate six models spanning three families and four scales: SmolLM2-360M-Instruct, Qwen3.5-0.8B, Qwen3.5-2B, Qwen3.5-4B, Qwen3.5-9B (with thinking mode), and Gemma-3-1B-it (instruction-tuned, no thinking mode). (iv)~\textit{Heuristic (Oracle)} uses manually curated keyword-to-capability rules without any cloud LLM, isolating the disclosure-budget mechanism from cloud planning quality. Combined with four cloud planners for \sysname, PII Redaction, and Raw Context, these baselines yield 19 method configurations in total.

\smallskip\noindent\textbf{\sysname configuration.}
Unless otherwise noted, \sysname uses heuristic Stage~1 extraction, capability-aware lazy disclosure, and a \emph{skill prompt} that teaches the cloud planner the twin's data format and capability semantics. We instantiate this configuration with each of the four cloud planners above (all with thinking mode enabled), yielding a total of 19 method configurations ($4~\text{planners} \times 3~\text{modes}$ + 6 local-only models + 1 heuristic) to systematically evaluate the privacy--utility--latency trade-off space.

\smallskip\noindent\textbf{Metrics.}
We report four metrics.  (i) \textit{Plan quality score (PQS)} $\uparrow$ is the task-averaged $F_1$ between approved and expected capabilities, capturing both over-generation and missing capabilities; we also report precision separately in the main results table. (ii) \textit{Sensitive non-disclosure (SND)} $\uparrow$ measures the fraction of sensitive items in the raw environment that do not appear in the planner-visible payload. (iii) \textit{Normalized leakage (NL)} $\downarrow$ measures the fraction of the disclosure budget consumed; for raw-context methods, we set NL$=1.0$ by convention. (iv) \textit{Latency overhead} measures the additional per-task cost introduced by projection and gatekeeper validation.


\subsection{Planning Utility and Privacy}\label{sec:eval:utility}

Table~\ref{tab:main-results} presents the main comparison across all nineteen configurations on the full 60-task benchmark.

\begin{table*}[!]
\centering
\caption{Main results across ten task domains (60-task evaluation per method)}
\label{tab:main-results}

\begin{threeparttable}
\resizebox{\linewidth}{!}{%
\begin{tabular}{c|c|cccccccccc|cccc}
\toprule
\multirow{2}{*}{\textbf{Planner}} & \multirow{2}{*}{\textbf{Method}} & \multicolumn{10}{c|}{\textbf{Per-Domain PQS}} & \multicolumn{4}{c}{\textbf{Combined} ($n$=60)} \\
\cmidrule(lr){3-12} \cmidrule(lr){13-16}
 & & \textbf{Code} & \textbf{Doc} & \textbf{Debug} & \textbf{DevOps} & \textbf{Pipe.} & \textbf{SecIR} & \textbf{MLOps} & \textbf{Front.} & \textbf{DB} & \textbf{API-I} & PQS $\uparrow$ & Prec $\uparrow$ & SND $\uparrow$ & NL $\downarrow$ \\
\midrule

\multicolumn{16}{c}{\textbf{Cloud LLM Planners}} \\
\midrule

Kimi~K2.5
  & Raw Context
  & 0.77 & 0.77 & 0.67 & 0.81 & 0.85 & 0.80 & 0.93 & 0.84 & 0.93 & 0.98
  & 0.843 & 0.831 & 0.000 & 1.000 \\
Kimi~K2.5
  & PII Redaction
  & 0.85 & 0.80 & 0.75 & 0.88 & 0.93 & 0.93 & 0.89 & 0.88 & 0.88 & 1.00
  & 0.887 & 0.886 & 0.835 & 0.165 \\
\rowcolor{kindsemcol!10}
Kimi~K2.5
  & \textbf{\sysname}
  & 0.79 & 0.80 & 0.68 & 0.72 & 1.00 & 0.74 & 0.79 & 0.67 & 0.90 & 0.91
  & 0.808 & 0.776 & 1.000 & 0.617 \\
\cmidrule(lr){1-2}\cmidrule(lr){3-12}\cmidrule(lr){13-16}

Gemini~3~Flash
  & Raw Context
  & 0.81 & 0.40 & 0.67 & 0.90 & 0.95 & 0.88 & 0.87 & 0.78 & 0.91 & 1.00
  & 0.832 & 0.833 & 0.000 & 1.000 \\
Gemini~3~Flash
  & PII Redaction
  & 0.89 & 0.40 & 0.71 & 0.95 & 0.89 & 0.91 & 1.00 & 0.90 & 0.83 & 0.95
  & 0.861 & 0.848 & 0.835 & 0.165 \\
\rowcolor{kindsemcol!10}
Gemini~3~Flash
  & \textbf{\sysname}
  & 0.71 & 0.68 & 0.57 & 0.72 & 0.93 & 0.48 & 0.68 & 0.70 & 0.66 & 0.77
  & 0.689 & 0.604 & 1.000 & 0.582 \\
\cmidrule(lr){1-2}\cmidrule(lr){3-12}\cmidrule(lr){13-16}

MiniMax~M2.5
  & Raw Context
  & 0.79 & 0.65 & 0.67 & 0.76 & 0.83 & 0.77 & 0.87 & 0.78 & 0.95 & 0.89
  & 0.802 & 0.761 & 0.000 & 1.000 \\
MiniMax~M2.5
  & PII Redaction
  & 0.65 & 0.75 & 0.73 & 0.77 & 0.78 & 0.79 & 0.87 & 0.75 & 0.91 & 0.83
  & 0.788 & 0.747 & 0.835 & 0.165 \\
\rowcolor{kindsemcol!10}
MiniMax~M2.5
  & \textbf{\sysname}
  & 0.75 & 0.69 & 0.67 & 0.83 & 1.00 & 0.70 & 0.86 & 0.73 & 0.72 & 0.81
  & 0.792 & 0.799 & 1.000 & 0.590 \\
\cmidrule(lr){1-2}\cmidrule(lr){3-12}\cmidrule(lr){13-16}

GLM~5
  & Raw Context
  & 0.77 & 0.40 & 0.60 & 0.89 & 0.90 & 0.87 & 1.00 & 0.90 & 0.82 & 0.98
  & 0.834 & 0.845 & 0.000 & 1.000 \\
GLM~5
  & PII Redaction
  & 0.77 & 0.40 & 0.67 & 0.94 & 0.95 & 0.95 & 1.00 & 0.93 & 0.78 & 0.94
  & 0.855 & 0.872 & 0.835 & 0.165 \\
\rowcolor{kindsemcol!10}
GLM~5
  & \textbf{\sysname}
  & 0.81 & 0.75 & 0.64 & 0.72 & 0.98 & 0.80 & 0.87 & 0.67 & 0.93 & 0.83
  & 0.810 & 0.765 & 1.000 & 0.615 \\
\cmidrule(lr){1-2}\cmidrule(lr){3-12}\cmidrule(lr){13-16}

\multicolumn{16}{c}{\textbf{Local Baselines (No Cloud)}} \\
\midrule

\rowcolor{gray!15} No cloud & Local (SmolLM2-360M)$^\ddagger$ & 0.10 & 0.50 & 0.50 & 0.12 & 0.38 & 0.50 & 0.17 & 0.33 & 0.17 & 0.17 & 0.283 & 0.567 & 1.000 & 0.000 \\
\rowcolor{gray!15} No cloud & Local (Qwen 3.5-0.8B)$^\ddagger$ & 0.10 & 0.50 & 0.50 & 0.12 & 0.38 & 0.50 & 0.17 & 0.33 & 0.17 & 0.17 & 0.283 & 0.567 & 1.000 & 0.000 \\
No cloud & Local (Gemma-3-1B)        & 0.55 & 0.86 & 0.50 & 0.54 & 0.60 & 0.57 & 0.55 & 0.55 & 0.55 & 0.44 & 0.567 & 0.559 & 1.000 & 0.000 \\
No cloud & Local (Qwen 3.5-2B)      & 0.10 & 0.50 & 0.50 & 0.12 & 0.38 & 0.50 & 0.17 & 0.33 & 0.17 & 0.17 & 0.283 & 0.567 & 1.000 & 0.000 \\
No cloud & Local (Qwen 3.5-4B)      & 0.27 & 0.40 & 0.50 & 0.46 & 0.58 & 0.50 & 0.52 & 0.70 & 0.54 & 0.30 & 0.484 & 0.656 & 1.000 & 0.000 \\
No cloud & Local (Qwen 3.5-9B)      & 0.10 & 0.50 & 0.86 & 0.12 & 0.64 & 0.93 & 0.73 & 0.50 & 0.45 & 0.60 & 0.531 & 0.642 & 1.000 & 0.000 \\
No cloud & Heuristic (Oracle)       & 0.53 & 0.50 & 0.50 & 0.78 & 0.90 & 0.70 & 0.59 & 0.54 & 0.43 & 0.84 & 0.653 & 0.782 & 1.000 & 0.558 \\

\bottomrule
\end{tabular}%
}
\begin{tablenotes}
\scriptsize
    \item[-] PQS = Plan Quality Score ($\uparrow$, $F_1$ of capability precision and recall), Prec = Capability Precision ($\uparrow$).
    \item[-] SND = Sensitive Non-Disclosure ($\uparrow$), NL = Normalized Leakage ($\downarrow$).
    \item[-] $^\ddagger$ Models $\leq 0.8$B fail to generate valid JSON plans; all PQS values reflect the single-capability fallback.
\end{tablenotes}
\end{threeparttable}
\end{table*}

\sysname achieves complete sensitive-item non-disclosure (SND =1.0) across all planners/domains: no API keys, passwords, PII, connection strings, SSH keys, or sensitive artifacts from the raw environment reach the cloud. This guarantee comes from architecture: the projection pipeline strips raw content before cloud exposure, and the planner only sees bounded categorical abstractions.

With Kimi~K2.5, \sysname reaches a combined PQS of 0.808 on the 60-task benchmark, compared with 0.843 for Raw Context. Two tasks produced degenerate outputs for all fourteen methods and were excluded from paired testing, leaving 58 common tasks. A paired Wilcoxon signed-rank test finds no significant difference between \sysname and Raw Context under Kimi ($T = 182.5$, $p = 0.079$, Bonferroni-corrected $p = 0.555$, $r = 0.230$), indicating that the planning abstraction preserves utility close to unrestricted context disclosure despite removing all raw content. This trend is consistent across planners: GLM~5 achieves PQS = 0.810 and MiniMax~M2.5 reaches 0.792, both statistically indistinguishable from Kimi. Gemini~3~Flash is weaker at PQS = 0.689, but also much faster, with 2.6$\times$ lower median latency (45.2\,s vs.\ 119.6\,s), highlighting a quality--latency trade-off.

\textit{PII redaction} achieves the highest plan quality (PQS=0.887 with Kimi) but only SND=0.835. Its failures are concentrated in coding tasks, where SND drops to 0.12 because regex-based redaction misses hardcoded connection strings and inline secret assignments. Across domains, SND ranges from 0.12 to 1.0, and overall PII redaction leaks 16.5\% of sensitive items, whereas \sysname leaks none.

Local-only planning achieves perfect non-disclosure (SND = 1.0) but plan quality is bounded by model capacity. Qwen3.5-2B produces PQS = 0.283, covering fewer than one of three expected capabilities per task; models at or below 2B (including SmolLM2-360M and Qwen3.5-0.8B) yield identical fallback plans. Scaling to Qwen3.5-4B improves PQS to 0.484 and Qwen3.5-9B reaches 0.531, but both exhibit high per-domain variance (9B: 0.10 on coding vs.\ 0.93 on security IR). Gemma-3-1B-it achieves the highest local-only PQS (0.567) with only 1B parameters, outperforming both Qwen3.5-4B and 9B while maintaining consistent cross-domain performance (0.44--0.86). Results indicate that instruction-tuning quality matters more than raw parameter count for structured plan generation. Nevertheless, the best local-only model (PQS=0.567) still falls 30\% below cloud-assisted \sysname (PQS=0.810 with GLM~5), showing that local SLMs alone cannot replace cloud-assisted planning.

The Heuristic (Oracle) baseline achieves PQS = 0.653 with SND = 1.0, but its per-domain scores vary widely (0.43--0.90) depending on how well the manually curated keyword rules match each domain's vocabulary; cloud-assisted \sysname consistently outperforms it on domains where keyword rules are weaker (coding: 0.79 vs.\ 0.53, debugging: 0.68 vs.\ 0.50, frontend: 0.67 vs.\ 0.54).

A Friedman test across all nineteen methods (repeated measures on 58 paired tasks) confirms globally significant differences in PQS distributions ($\chi^2 = 237.55$, $p < 0.001$, Kendall's $W = 0.315$). Pairwise Wilcoxon signed-rank tests with Bonferroni correction (18 comparisons vs.\ \sysname+Kimi) show that \sysname (Kimi) vs.\ Raw Context (Kimi) is \emph{not} significant ($p_{\text{Bonf}} = 1.0$, $r = 0.230$), and neither are Raw Context with Gemini ($p_{\text{Bonf}} = 1.0$), MiniMax ($p_{\text{Bonf}} = 1.0$), or GLM ($p_{\text{Bonf}} = 1.0$). Among \sysname variants, both MiniMax ($p_{\text{Bonf}} = 1.0$, $r = 0.034$) and GLM ($p_{\text{Bonf}} = 1.0$, $r = 0.049$) are statistically indistinguishable from Kimi, while Gemini is significantly lower ($p_{\text{Bonf}} = 0.004$, $r = 0.475$). PII Redaction (Kimi) achieves significantly higher PQS than \sysname (Kimi) ($p_{\text{Bonf}} = 0.013$, $r = 0.429$), the trade-off for its 16.5\% leakage. \sysname (Kimi) is significantly better than Heuristic (Oracle) ($p_{\text{Bonf}} < 0.001$, $r = 0.596$) and all six local-only models ($p_{\text{Bonf}} < 0.001$ for all, $r \geq 0.596$).

\begin{center}
   \colorbox{theorycol!25}{
    \begin{minipage}{0.9\linewidth}
\textbf{Privacy-utility.} \sysname is the only method that achieves complete sensitive-item protection (\textsc{snd}$=1.0$) while matching full disclosure on 3 of 4 cloud planners; PII redaction yields higher PQS but leaks 16.5\% of sensitive items.
   \end{minipage}
} 
\end{center}

Fig.\ref{fig:privacy-utility} visualizes the privacy-utility trade-off across all fourteen configurations, plotting SND (privacy) against PQS (utility) with per-task standard deviation error bars computed over all 60 tasks. The four \sysname configurations cluster in the upper-right quadrant (high SND, high PQS), achieving an operating point that no other method reaches. Raw Context methods cluster in the lower-right (high PQS, zero SND), while Local-Only occupies the upper-left corner (perfect SND, low PQS).

\begin{figure}[t]
\centering
\includegraphics[width=0.99\columnwidth]{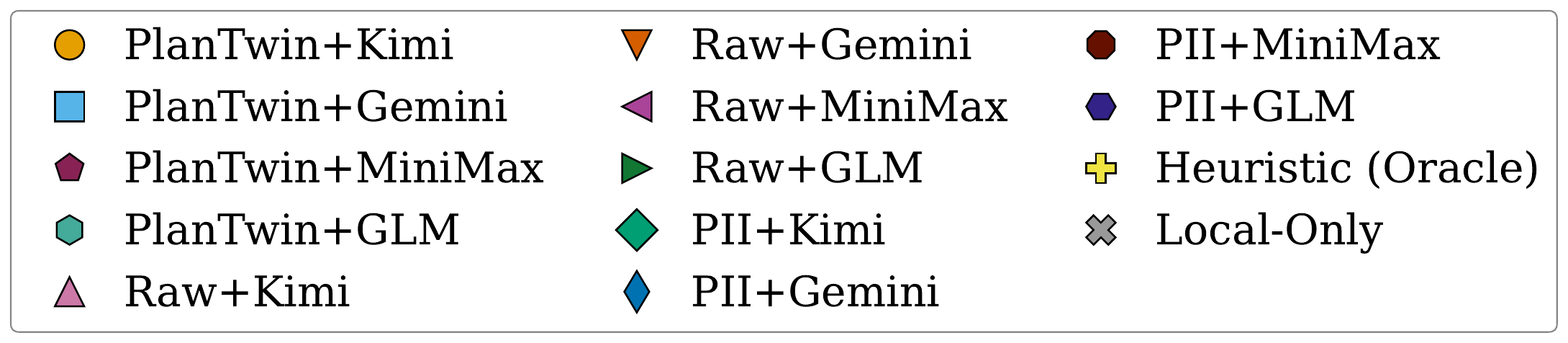}
\includegraphics[width=0.8\columnwidth]{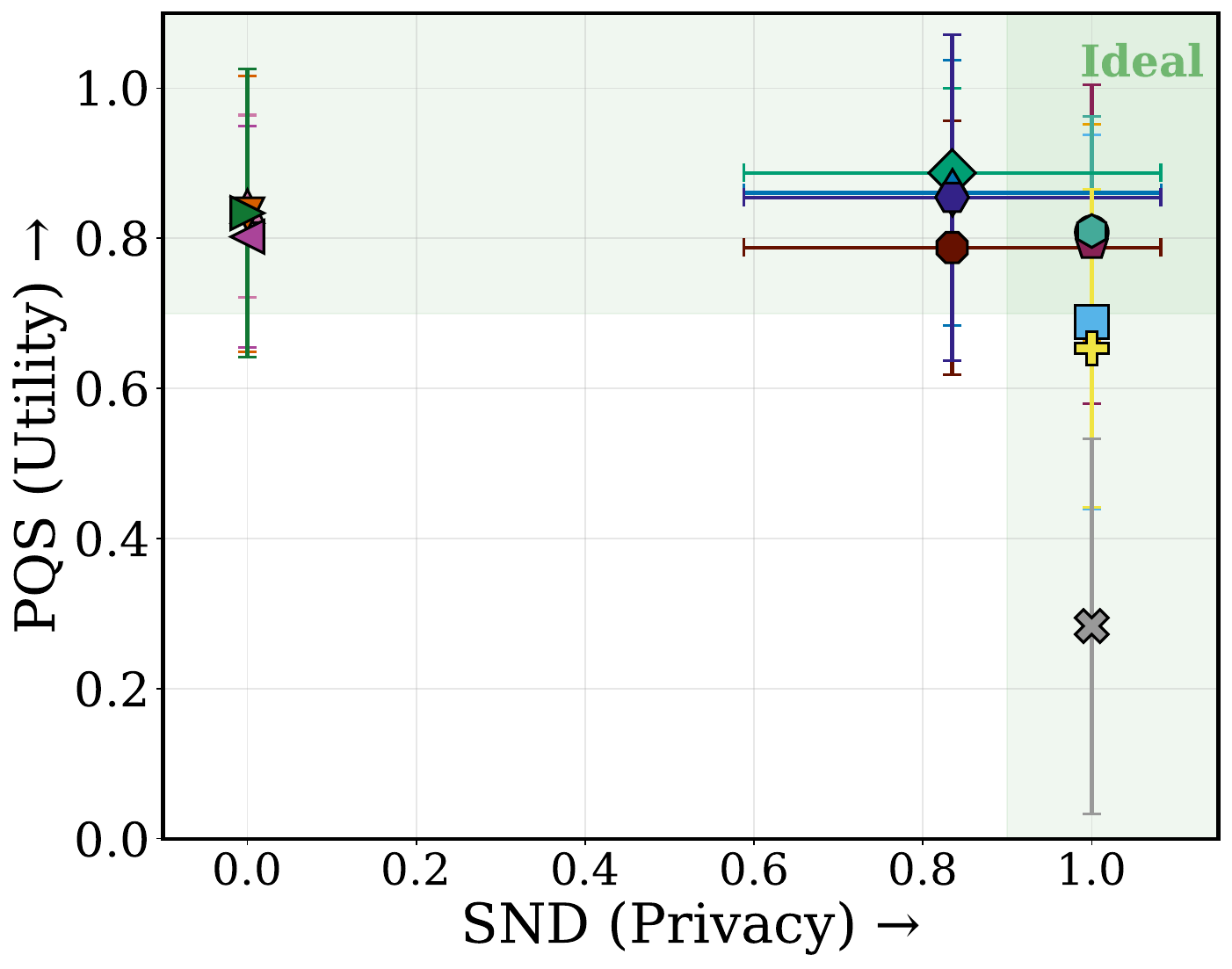}

\caption{Privacy-utility trade-off for 19 configurations over 60 tasks in 10 domains. Only \sysname reaches the ideal high-privacy/utility region, achieving SND=1.0 with PQS$>$0.7. Error bars show per-task standard deviation.}
\label{fig:privacy-utility}
\end{figure}

\subsection{Privacy Budget Analysis}\label{sec:eval:budget_analysis}

Average budget utilization across all 60 tasks reaches approximately 60\% for all four \sysname configurations (Kimi NL = 0.617, GLM NL = 0.615, MiniMax NL = 0.590, Gemini NL = 0.582), compared to 55.8\% for the Heuristic (Oracle) baseline. The tight NL range (0.582--0.617) across four diverse planners confirms that budget consumption is driven by the local gatekeeper's disclosure policy, not the cloud planner's behavior. Only the \texttt{kind} field is disclosed at registration (0.5 cost), and additional fields are disclosed on-demand per capability via CAPABILITY\_REQUIRED\_FIELDS. No object exceeds its threshold $\tau(o)$. PII redaction, by contrast, achieves low NL (0.165) only because it sends less structured metadata---but this comes at the cost of leaking raw content (SND = 0.835).

\begin{center}
   \colorbox{theorycol!25}{
    \begin{minipage}{0.9\linewidth}
\textbf{Budget effectiveness.} Capability-aware lazy disclosure keeps normalized leakage near 60\% across planners, preserving headroom for multi-turn interaction. Budget use is enforced locally by the gatekeeper.
   \end{minipage}
} 
\end{center}

\subsection{System Overhead}\label{sec:eval:overhead}

Table~\ref{tab:overhead} reports per-task latency broken down into \emph{local projection} (twin construction, including optional SLM-backed extraction) and \emph{cloud API} (planner round-trip). For baselines without schema projection (Raw Context, PII Redaction), the local component is negligible ($<$1\,ms); the entire latency is the cloud API call. For \sysname, the local projection phase includes Qwen~3.5-2B--backed extraction for Stage~1, adding 7--26\,s of SLM inference depending on task complexity. With heuristic-only extraction (no SLM), this drops to $<$1\,ms; the deterministic Stages~2--4 and gatekeeper contribute $<$1\,ms regardless. For local-only baselines, latency scales with model size: Qwen3.5-2B runs at median 6.7\,s, Gemma-3-1B at 14.3\,s, while Qwen3.5-4B (126.1\,s) and 9B (159.8\,s) incur high overhead due to thinking-mode token generation (max 8192 tokens).

\begin{table*}[!]
\centering
\caption{Per-task overhead breakdown (seconds)}
\label{tab:overhead}

\begin{threeparttable}
\resizebox{\linewidth}{!}{%
\begin{tabular}{c|c|cc|cccccccccc|c}
\toprule
\multirow{2}{*}{\textbf{Planner}} & \multirow{2}{*}{\textbf{Method}} & \multicolumn{2}{c|}{\textbf{Phase (mean)}} & \multicolumn{10}{c|}{\textbf{Per-Domain Total Latency}} & \multirow{2}{*}{\textbf{Med.}} \\
\cmidrule(lr){3-4} \cmidrule(lr){5-14}
& & \textbf{Proj.} & \textbf{API} & \textbf{Code} & \textbf{Doc} & \textbf{Debug} & \textbf{DevOps} & \textbf{Pipe.} & \textbf{SecIR} & \textbf{MLOps} & \textbf{Front.} & \textbf{DB} & \textbf{API-I} & \\
\midrule

\multicolumn{15}{c}{\textbf{Cloud LLM Planners}} \\
\midrule

Kimi~K2.5
  & Raw Context
  & --- & 53.0\,s & 67.9\,s & 101.5\,s & 52.7\,s & 88.0\,s & 105.4\,s & 111.2\,s & 95.4\,s & 48.4\,s & 248.2\,s & 137.9\,s & 66.8\,s \\
Kimi~K2.5
  & PII Redaction
  & $<$0.1\,s & 79.1\,s & 92.3\,s & 60.1\,s & 34.9\,s & 73.7\,s & 92.4\,s & 102.4\,s & 42.6\,s & 306.7\,s & 109.6\,s & 332.5\,s & 75.2\,s \\
\rowcolor{kindsemcol!10}
Kimi~K2.5
  & \textbf{\sysname}
  & 13.3\,s & 81.2\,s & 204.2\,s & 135.2\,s & 264.8\,s & 65.5\,s & 66.7\,s & 112.4\,s & 303.8\,s & 377.0\,s & 115.9\,s & 103.2\,s & 119.6\,s \\
\cmidrule(lr){1-2}\cmidrule(lr){3-4}\cmidrule(lr){5-14}\cmidrule(lr){15-15}

Gemini~3~Flash
  & Raw Context
  & --- & 6.5\,s & 17.8\,s & 11.5\,s & 12.9\,s & 13.6\,s & 12.2\,s & 19.9\,s & 19.3\,s & 15.5\,s & 17.1\,s & 13.4\,s & 14.4\,s \\
Gemini~3~Flash
  & PII Redaction
  & $<$0.1\,s & 8.5\,s & 23.0\,s & 12.8\,s & 19.1\,s & 14.9\,s & 18.8\,s & 28.8\,s & 16.3\,s & 17.6\,s & 23.7\,s & 27.4\,s & 19.3\,s \\
\rowcolor{kindsemcol!10}
Gemini~3~Flash
  & \textbf{\sysname}
  & 25.0\,s & 8.9\,s & 43.0\,s & 34.2\,s & 41.4\,s & 54.6\,s & 42.1\,s & 65.4\,s & 52.3\,s & 31.7\,s & 52.5\,s & 46.4\,s & 45.2\,s \\
\cmidrule(lr){1-2}\cmidrule(lr){3-4}\cmidrule(lr){5-14}\cmidrule(lr){15-15}

MiniMax~M2.5
  & Raw Context
  & --- & 28.6\,s & 14.1\,s & 19.9\,s & 30.5\,s & 16.5\,s & 95.9\,s & 15.8\,s & 17.0\,s & 9.7\,s & 20.0\,s & 24.7\,s & 16.6\,s \\
MiniMax~M2.5
  & PII Redaction
  & $<$0.1\,s & 44.8\,s & 29.4\,s & 135.0\,s & 22.6\,s & 28.8\,s & 30.4\,s & 16.0\,s & 16.7\,s & 115.4\,s & 13.9\,s & 55.3\,s & 15.3\,s \\
\rowcolor{kindsemcol!10}
MiniMax~M2.5
  & \textbf{\sysname}
  & 34.3\,s & 19.3\,s & 40.1\,s & 36.6\,s & 37.0\,s & 61.6\,s & 56.2\,s & 69.0\,s & 64.3\,s & 42.6\,s & 64.3\,s & 50.1\,s & 56.1\,s \\
\cmidrule(lr){1-2}\cmidrule(lr){3-4}\cmidrule(lr){5-14}\cmidrule(lr){15-15}

GLM~5
  & Raw Context
  & --- & 44.9\,s & 44.9\,s & 17.3\,s & 34.0\,s & 23.1\,s & 28.2\,s & 32.9\,s & 43.3\,s & 41.1\,s & 34.4\,s & 154.2\,s & 28.4\,s \\
GLM~5
  & PII Redaction
  & $<$0.1\,s & 66.6\,s & 44.0\,s & 36.7\,s & 44.5\,s & 48.6\,s & 53.4\,s & 61.3\,s & 43.3\,s & 72.2\,s & 206.5\,s & 49.9\,s & 48.0\,s \\
\rowcolor{kindsemcol!10}
GLM~5
  & \textbf{\sysname}
  & 33.4\,s & 32.1\,s & 46.0\,s & 34.6\,s & 38.0\,s & 70.2\,s & 60.8\,s & 102.3\,s & 67.6\,s & 49.5\,s & 91.6\,s & 77.4\,s & 63.3\,s \\
\cmidrule(lr){1-2}\cmidrule(lr){3-4}\cmidrule(lr){5-14}\cmidrule(lr){15-15}

\multicolumn{15}{c}{\textbf{Local Baselines (No Cloud)}} \\
\midrule

\rowcolor{gray!15} No cloud & Local (SmolLM2-360M)$^\ddagger$ & --- & 12.3\,s$^\dagger$ & 9.7\,s & 0.7\,s & 3.1\,s & 28.9\,s & 1.9\,s & 2.8\,s & 5.7\,s & 32.5\,s & 33.4\,s & 4.6\,s & 5.2\,s \\
\rowcolor{gray!15} No cloud & Local (Qwen 3.5-0.8B)$^\ddagger$ & --- & 3.6\,s$^\dagger$ & 6.1\,s & 2.6\,s & 2.8\,s & 4.4\,s & 2.8\,s & 2.8\,s & 5.6\,s & 2.9\,s & 2.9\,s & 2.9\,s & 2.9\,s \\
No cloud & Local (Gemma-3-1B)  & --- & 15.1\,s$^\dagger$ & 18.9\,s & 11.8\,s & 17.2\,s & 17.5\,s & 13.4\,s & 14.0\,s & 15.8\,s & 14.3\,s & 14.2\,s & 13.7\,s & 14.3\,s \\
No cloud & Local (Qwen 3.5-2B) & --- & 15.6\,s$^\dagger$ & 8.8\,s & 1.9\,s & 2.1\,s & 49.1\,s & 4.9\,s & 12.0\,s & 8.2\,s & 61.8\,s & 5.1\,s & 2.2\,s & 6.7\,s \\
No cloud & Local (Qwen 3.5-4B) & --- & 142.3\,s$^\dagger$ & 179.4\,s & 22.8\,s & 234.9\,s & 171.2\,s & 93.3\,s & 234.3\,s & 117.7\,s & 126.7\,s & 125.4\,s & 117.6\,s & 126.1\,s \\
No cloud & Local (Qwen 3.5-9B) & --- & 137.6\,s$^\dagger$ & 42.7\,s & 34.9\,s & 226.8\,s & 80.7\,s & 177.6\,s & 228.1\,s & 192.8\,s & 73.0\,s & 151.1\,s & 168.5\,s & 159.8\,s \\
No cloud & Heuristic (Oracle) & $<$0.1\,s & --- & $<$0.1\,s & $<$0.1\,s & $<$0.1\,s & $<$0.1\,s & $<$0.1\,s & $<$0.1\,s & $<$0.1\,s & $<$0.1\,s & $<$0.1\,s & $<$0.1\,s & $<$0.1\,s \\

\bottomrule
\end{tabular}%
}
\begin{tablenotes}
\scriptsize
\item[-] Proj. = mean local projection time (twin construction + optional SLM extraction), API = mean cloud planner API round-trip.
\item[-] Per-domain columns report mean total latency. All four cloud planners use thinking mode. Kimi~K2.5 exhibits high variance due to API load.
\item[-] $^\dagger$Local SLM inference (no cloud API); model size indicated in parentheses. ``---''\ = not applicable.
\item[-] $^\ddagger$Models $\leq$0.8B fail to generate valid JSON plans (fallback only).
\end{tablenotes}
\end{threeparttable}
\end{table*}

The phase reveals that local SLM projection is consistent across planners (13--34\,s), while cloud API latency varies substantially. For \sysname with Kimi~K2.5, the cloud API dominates (mean 81.2\,s, median 119.6\,s) with high variance due to API load. With MiniMax~M2.5 (thinking), the cloud API averages 19.3\,s, yielding a median total of 56.1\,s. GLM~5 (thinking) falls between Kimi and MiniMax at 32.1\,s mean API time (median total 63.3\,s). Gemini~3~Flash (thinking) is fastest at 8.9\,s mean API time (median total 45.2\,s).

The projection overhead is partially offset by prompt compression: on DevOps tasks, \sysname+Kimi (65.5\,s) is faster than Raw+Kimi (88.0\,s) because the compact twin reduces prompt tokens. With heuristic-only extraction (no SLM), the local phase drops to $<$1\,ms, reducing total latency to the cloud API time alone.

\begin{center}
   \colorbox{theorycol!25}{
    \begin{minipage}{0.9\linewidth}
\textbf{Practical overhead.} Total latency is dominated by SLM extraction (13--34\,s) and cloud API time; the deterministic projection pipeline adds $<1$\,ms. End-to-end latency ranges from 45\,s to 120\,s, and with heuristic extraction it is approximately the cloud API time.
   \end{minipage}
} 
\end{center}

Complexity analysis, token scaling, and deployment considerations are discussed in \S\ref{sec:discussion:complexity}.

\begin{figure}[b]
\centering
\includegraphics[width=0.8\columnwidth]{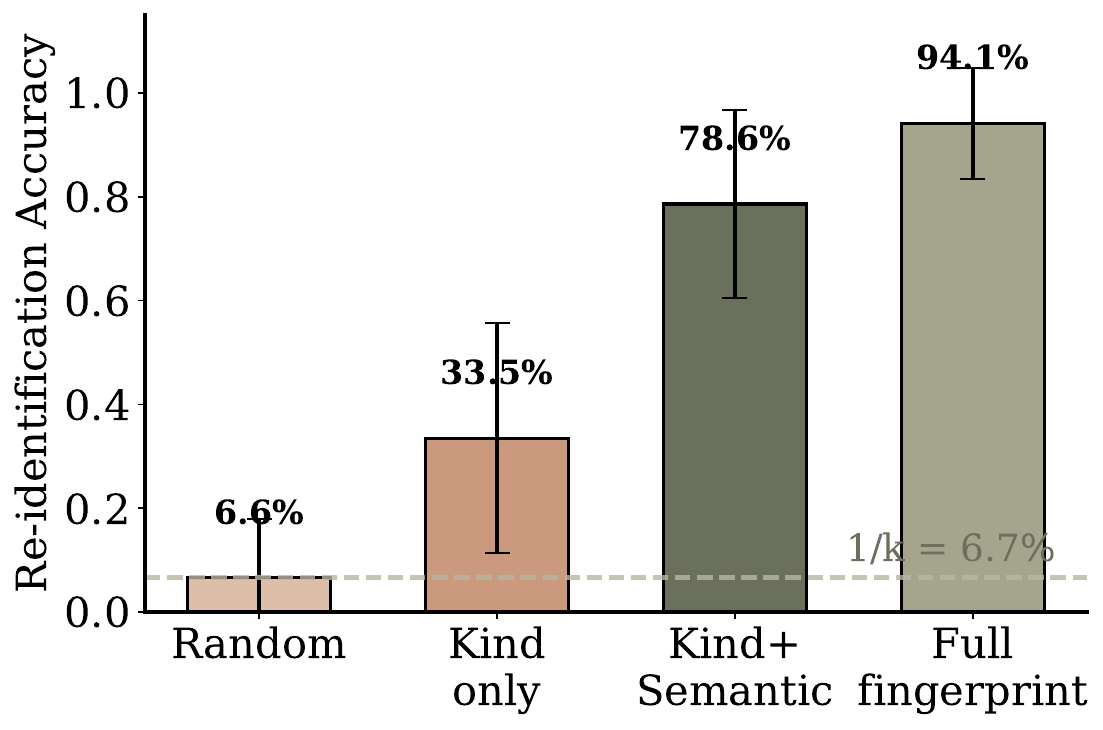}
\caption{Adversarial re-identification accuracy by strategy with $k=15$ candidate files. Error bars show standard deviation, and the dashed line marks the random baseline $1/k$. Full fingerprint matching reaches 94.1\% accuracy.}
\label{fig:reid-strategies}
\end{figure}

\subsection{Per-Domain Analysis}\label{sec:eval:domain}

Table~\ref{tab:main-results} presents per-domain PQS across all ten domains. With Kimi~K2.5, \sysname achieves PQS ranging from 0.67 (frontend) to 1.00 (data pipeline). Data pipeline tasks yield perfect PQS because their structured ETL workflows align naturally with the typed abstract graph representation. Database (0.90) and API integration (0.91) tasks also achieve high PQS, confirming that the projection pipeline generalizes well to structured integration workflows.

Cross-planner comparison reveals that Kimi, MiniMax, and GLM produce similar per-domain profiles: data pipeline (1.00/1.00/0.98), API integration (0.91/0.81/0.83), and database (0.90/0.72/0.93). The largest inter-planner variation occurs on security IR, where Kimi (0.74) and GLM (0.80) outperform MiniMax (0.70) and Gemini (0.48), suggesting that multi-step reasoning over the structured twin benefits from stronger planners. On DevOps tasks, all four planners are tightly clustered (0.72/0.72/0.83/0.72), while data pipeline tasks yield near-perfect PQS across the board, confirming that highly structured ETL workflows are planner-agnostic.

PII redaction's SND varies sharply by domain: coding tasks achieve only SND = 0.12 (regex misses 88\% of code-embedded secrets), while documents and API integration reach SND = 1.0. Database tasks show intermediate SND = 0.73, where regex partially captures connection strings but misses embedded PGP keys. \sysname maintains SND = 1.0 across all ten domains without exception, as the architectural guarantee is independent of sensitive-item distribution.

\begin{center}
   \colorbox{theorycol!25}{
    \begin{minipage}{0.9\linewidth}
\textbf{Domain generalization.} \sysname generalizes across ten domains and four cloud planners, sustaining \textsc{snd}$=1.0$ with PQS of 0.67-1.00. Consistent domain-level trends indicate planner-agnostic privacy-preserving projection.
   \end{minipage}
} 
\end{center}

\subsection{Adversarial Re-identification.}
\label{sec:eval:reid}

We empirically evaluate the re-identification risk of the twin's schema fields through a game-based experiment aligned with Goal~1 (\S\ref{sec:system:threat}). An adversary with auxiliary knowledge observes a target object's twin and attempts to match it to the correct file in a candidate pool of size $k$. We model four adversary strategies with increasing auxiliary knowledge: (1)~\emph{Random} guessing (baseline), (2)~\emph{Kind-only} matching using the file type field, (3)~\emph{Kind+Semantic} matching using both \texttt{kind} and \texttt{semantic\_class}, and (4)~\emph{Full fingerprint} matching using all disclosed twin fields (\texttt{kind}, \texttt{semantic\_class}, \texttt{contains}, \texttt{sensitivity}). Each strategy selects the candidate that maximizes field overlap with the target twin. We run 500 trials with pool size $k=15$ and 5 target objects per trial, drawing from a diverse template library spanning coding, configuration, documentation, and infrastructure files.

Fig.\ref{fig:reid-strategies} presents the results. Random guessing achieves 6.6\% accuracy, matching the theoretical $1/k = 6.7\%$. Knowing only the file type (\texttt{kind}) increases accuracy to 33.5\%, because the six kind categories already partition the pool into smaller anonymity sets. Adding the \texttt{semantic\_class} field raises accuracy to 78.6\%, since semantic labels such as \texttt{authentication\_module} or \texttt{payment\_service} are highly identifying. With the full fingerprint, accuracy reaches 94.1\% (MRR = 0.973), confirming that the twin's combined schema fields carry substantial re-identification potential when disclosed without budget control.

To evaluate how the anonymity set size affects re-identification risk, we repeat the experiment with pool sizes $k \in \{5, 10, 15, 20, 25, 30\}$ (300 trials each). Fig.\ref{fig:reid-scaling} shows the results. Random accuracy scales as $1/k$, decreasing from 19.3\% ($k=5$) to 4.3\% ($k=30$). Kind-only and Kind+Semantic strategies degrade with increasing $k$, dropping from 63.1\% and 94.1\% at $k=5$ to 19.7\% and 63.3\% at $k=30$, respectively. 
However, the full fingerprint adversary remains above 88.3\% even at $k=30$, indicating that the twin's combined fields are inherently identifying regardless of pool size. This result empirically motivates the disclosure budget mechanism (\S\ref{sec:approach:disclosure}): absent per-object budget limits, an adversary with auxiliary knowledge can re-identify files from the twin alone with high confidence.

\begin{center}
   \colorbox{theorycol!25}{
    \begin{minipage}{0.9\linewidth}
\textbf{Re-identification risk.} Without disclosure budget control, an adversary with full auxiliary knowledge achieves 94.1\% re-identification accuracy ($k=15$) and $>$88\% even at $k=30$. This empirically validates the need for per-object disclosure budgets: individual schema fields are not identifying, but their combination creates a unique fingerprint.
   \end{minipage}
} 
\end{center}

\subsection{Pipeline Stage Ablation}
\label{sec:eval:pipeline_ablation}

We isolate the contribution of each pipeline stage by running four variants: (1)~\emph{Full} (all four stages), (2)~\emph{No Redaction} (skip Stage~2, sensitive entity redaction), (3)~\emph{No Generalize} (skip Stage~3, value generalization), and (4)~\emph{Extract+Schema} (Stages~1--2 only, extract metadata and build typed graph but skip generalization and schema projection). All variants use Kimi~K2.5 with plan-aware prompting on the same 58-task benchmark.

Fig.\ref{fig:ablation-pipeline} presents the results. Across all four variants, SND remains 1.0, confirming that the typed abstract graph representation provides architectural data isolation regardless of whether downstream redaction or generalization stages are active. PQS ranges from 0.773 (Full) to 0.790 (No Redaction), a spread of only 0.017. Skipping redaction yields a marginal PQS improvement (+2.2\%) because the planner receives slightly richer metadata fields, while skipping generalization has negligible effect (+0.2\%). The Extract+Schema variant (0.788) performs comparably to Full, suggesting that the core utility derives from the structured twin representation itself rather than the privacy-hardening stages.

\begin{figure}[!]
\centering
\includegraphics[width=0.8\columnwidth]{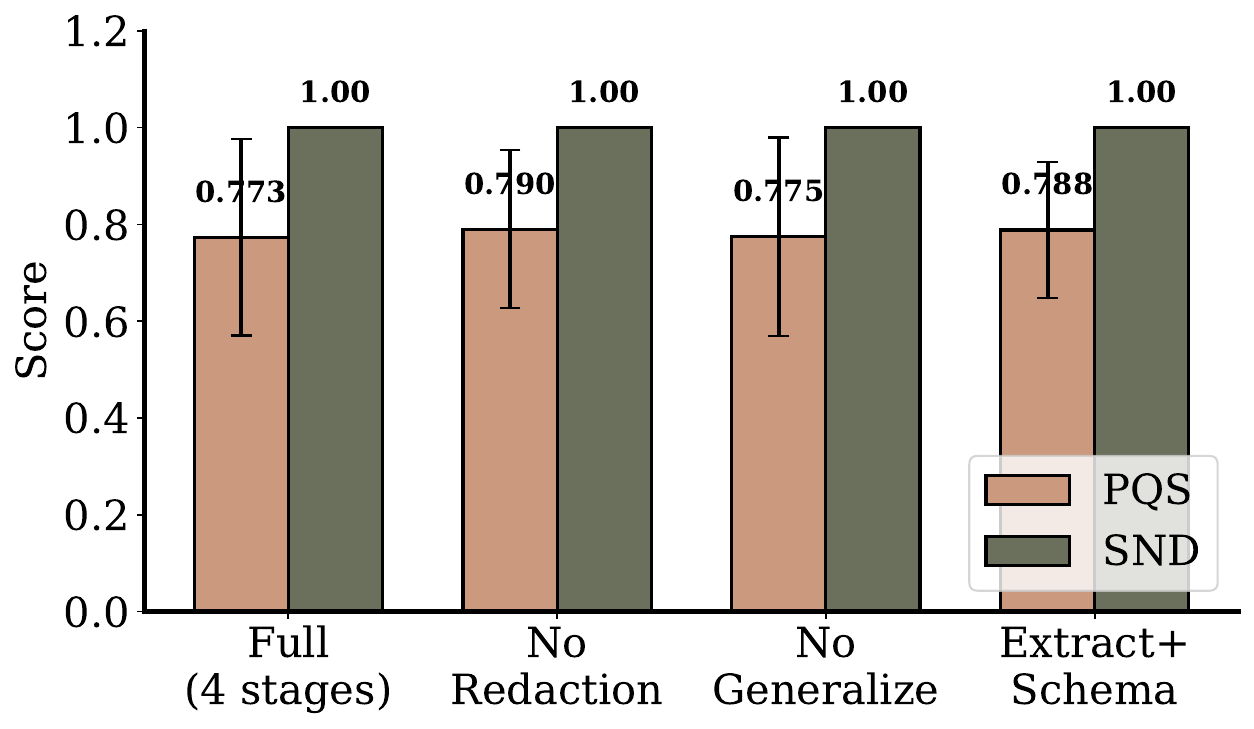}
\caption{Pipeline stage ablation across four variants. All maintain SND = 1.0, while PQS stays within 0.773--0.790, indicating minimal utility loss from privacy hardening. Error bars show per-task standard deviation.}
\label{fig:ablation-pipeline}
\end{figure}

Per-domain analysis reveals that data pipeline tasks achieve PQS = 1.00 across all four variants, while coding tasks show the largest inter-variant spread (0.534--0.753), reflecting their higher sensitivity to metadata granularity. Normalized leakage (NL) is lowest for No Redaction (0.553) and Extract+Schema (0.566), as fewer pipeline stages consume less disclosure budget per object.

\begin{center}
   \colorbox{theorycol!25}{
    \begin{minipage}{0.9\linewidth}
\textbf{Pipeline ablation.} The privacy-hardening stages (redaction, generalization) impose $<$2.2\% PQS cost while the typed abstract graph representation alone guarantees SND = 1.0. The full 4-stage pipeline offers the best privacy--utility balance: minimal utility loss with maximum defense-in-depth.
   \end{minipage}
} 
\end{center}

\subsection{Budget Threshold Sensitivity}
\label{sec:eval:budget_sensitivity}

We vary the disclosure budget threshold by applying scale factors $\alpha \in \{0.25, 0.5, 0.75, 1.0, 1.5\}$ to the default per-object budget $\tau(o)$, yielding effective thresholds $\alpha \cdot \tau(o)$. Lower $\alpha$ restricts disclosure more aggressively; higher $\alpha$ permits more fields to be revealed.

Fig.\ref{fig:ablation-budget} shows the privacy--utility trade-off. At $\alpha = 0.25$ (quarter budget), PQS drops to 0.513 as the gatekeeper aggressively withholds twin fields, starving the planner of context. Doubling the budget to $\alpha = 0.5$ recovers PQS to 0.639. At $\alpha = 0.75$, PQS reaches 0.801, matching the default ($\alpha = 1.0$, PQS = 0.801) and indicating that the default threshold already provides near-saturated utility. Increasing to $\alpha = 1.5$ yields PQS = 0.783, a slight decrease likely due to stochastic variation rather than a meaningful trend; SND remains 1.0 throughout.

Normalized leakage exhibits an inverted-U pattern: NL peaks at $\alpha = 0.75$ (0.777) then decreases at $\alpha = 1.0$ (0.626) and $\alpha = 1.5$ (0.466). This occurs because larger budgets leave more unused headroom per object, reducing the fraction of budget consumed. The practical implication is that $\alpha \in [0.75, 1.0]$ represents a sweet spot: PQS is saturated while NL remains moderate.

\begin{figure}[!]
\centering
\includegraphics[width=0.75\columnwidth]{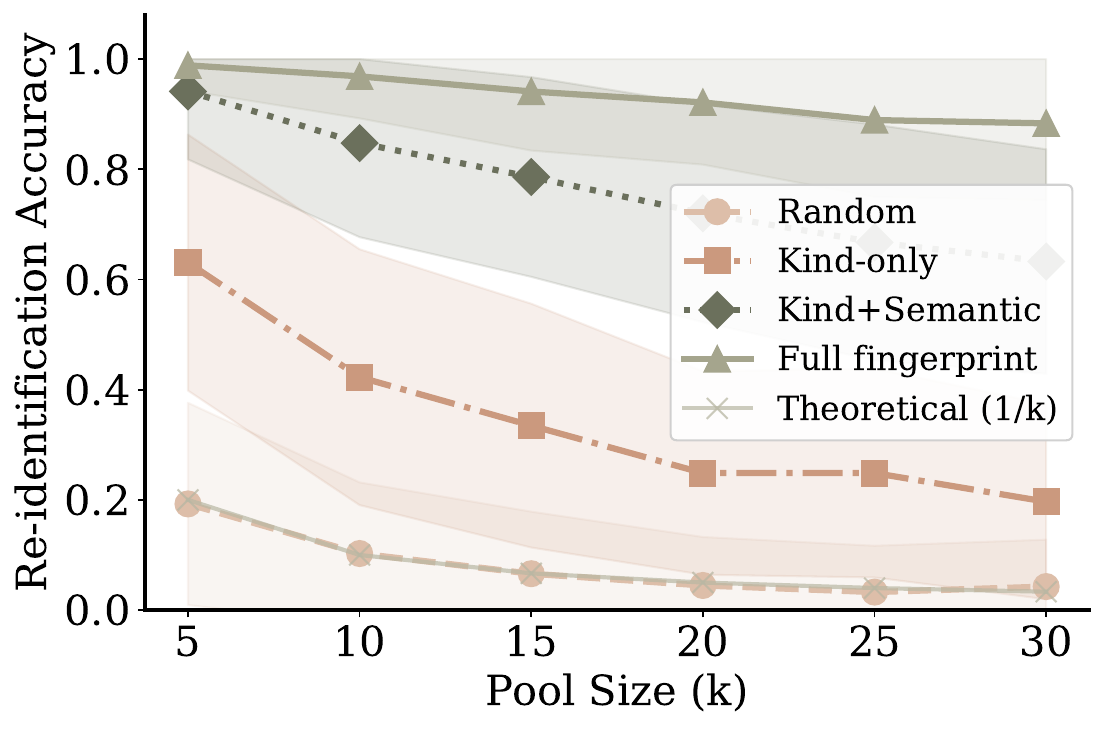}
\caption{Re-identification accuracy versus anonymity set size $k$. Full fingerprint matching remains above 88\% even at $k=30$, while weaker strategies degrade with larger pools. Shaded regions denote $\pm 1$ std.\ dev.}
\label{fig:reid-scaling}
\vspace{-0.1cm}
\end{figure}

\begin{figure}[!]
\centering
\includegraphics[width=0.8\columnwidth]{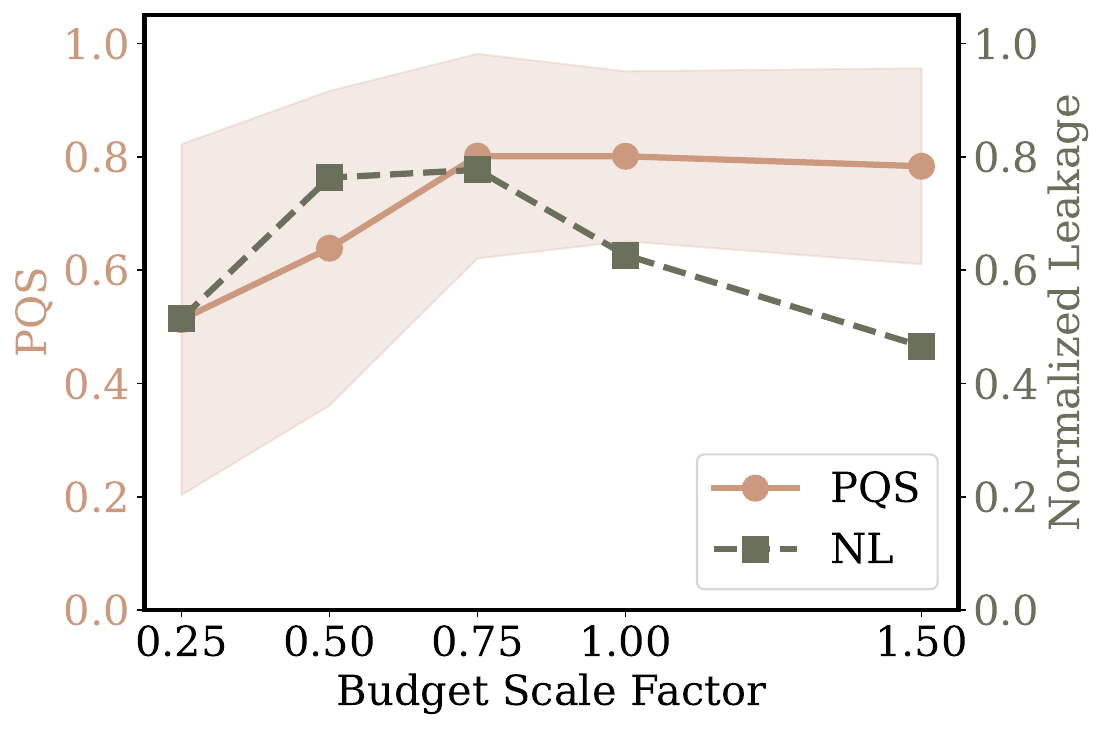}
\caption{Budget threshold sensitivity versus scale factor $\alpha$. PQS saturates at $\alpha \geq 0.75$, while normalized leakage peaks at $\alpha = 0.75$ and then declines as larger budgets leave more headroom. Shading denotes $\pm 1$ std.\ dev.\ in PQS.}
\label{fig:ablation-budget}
\end{figure}

\begin{center}
   \colorbox{theorycol!25}{
    \begin{minipage}{0.9\linewidth}
\textbf{Budget sensitivity.} PQS saturates at $\alpha \geq 0.75$, confirming that the default budget threshold is not over-restrictive. Reducing the budget to $\alpha = 0.25$ halves PQS, while $\alpha = 1.5$ provides no additional utility. The sweet spot $\alpha \in [0.75, 1.0]$ achieves PQS $>$ 0.80 with NL $<$ 0.78.
   \end{minipage}
} 
\end{center}

\subsection{Multi-Turn Budget Depletion}
\label{sec:eval:multiturn}

We evaluate the disclosure budget mechanism under multi-turn usage, where the middleware persists across all 58 tasks and budget consumption accumulates. This simulates a realistic deployment where an agent processes a sequence of tasks over a session, progressively depleting per-object budgets.

Fig.\ref{fig:ablation-multiturn-pqs} compares per-domain PQS. Overall PQS decreases from 0.773 (single-turn) to 0.767 (multi-turn), a modest $-$0.8\% degradation. The largest drops occur on coding ($-$0.233, from 0.753 to 0.520) and data pipeline ($-$0.167, from 1.000 to 0.833), where objects are accessed across multiple tasks, depleting their budgets and forcing the gatekeeper to withhold fields. Conversely, document (+0.133), security IR (+0.185), and database (+0.006) tasks show slight PQS \emph{increases} in multi-turn mode, because the planner benefits from residual context in the persistent twin graph.

Fig.\ref{fig:ablation-multiturn-nl} reveals the privacy side: multi-turn drives NL to 1.0 across all ten domains, compared to a mean of 0.611 in single-turn mode. This confirms that cumulative disclosure fully exhausts per-object budgets over a 58-task session. Importantly, SND remains 1.0 in both modes---the budget mechanism caps total leakage at $\tau(o)$ per object without ever exposing raw content. During the multi-turn run, 8 objects were excluded entirely (``budget exceeded at registration''), demonstrating the gatekeeper's enforcement of hard budget limits.

\begin{center}
   \colorbox{theorycol!25}{
    \begin{minipage}{0.9\linewidth}
\textbf{Multi-turn depletion} Multi-turn budget accumulation lowers PQS by 0.8\%, while NL reaches 1.0. The system degrades gracefully: over-budget objects are excluded rather than leaking additional information, and SND stays at 1.0 throughout.
   \end{minipage}
} 
\end{center}

\section{Discussion}
\label{sec:discussion}

We now discuss why \sysname provides privacy benefits beyond execution isolation alone, what leakage channels remain, how the design scales with environment and workflow size, and the main limitations of the current study.

\subsection{Privacy and Security Discussion}
\label{sec:discussion:security}

\sysname does not aim to make cloud planning \emph{cryptographically private}. Instead, it reduces exposure architecturally by constraining what the planner can observe through schema projection, bounded capabilities, and cumulative disclosure control.

\smallskip
\noindent\textbf{Content reconstruction.}
For raw-content reconstruction, the design is structurally restrictive. The cloud never receives raw files, free-form text, paths, credentials, or unsanitized execution outputs. All planner-visible state is produced by the projection pipeline and expressed through bounded schema fields.

As a result, the cloud-visible channel excludes raw content by construction rather than attempting to scrub it after the fact. Reconstruction therefore becomes possible only if the local projection itself emits overly specific attributes.

\smallskip
\noindent\textbf{Re-identification.}
For object re-identification (Goal~1), privacy depends not only on the schema vocabulary but also on how much of it is disclosed over time. As shown in \S\ref{sec:eval:reid}, disclosing all twin fields allows an adversary to recover object identity with high confidence. This means the twin is not intrinsically anonymous. Instead, privacy comes from \emph{partial observability}: the gatekeeper prevents the planner from accumulating a complete object fingerprint by enforcing per-object disclosure budgets. In this sense, the budget is a core security mechanism rather than a secondary optimization.

\smallskip
\noindent\textbf{Cross-session linkage.}
For cross-session linkage (Goal~2), fresh random object identifiers prevent trivial reuse of IDs across sessions. However, identifier randomization alone is insufficient. Repeated disclosure of rare semantic or relational patterns may still support linkage if the same object induces a distinctive signature across sessions. In practice, unlinkability is improved by the combination of session-local IDs, bounded vocabularies, and cumulative disclosure limits. The goal is not to make objects indistinguishable in principle, but to prevent the planner from observing enough stable attributes to link them reliably.

\smallskip
\noindent\textbf{Structural inference.}
For structural inference (Goal~3), \sysname reduces but does not eliminate leakage. Graph topology is itself informative: unusual dependency patterns, rare capability requirements, or asymmetric connectivity may reveal properties of the underlying project even when node attributes are sanitized. Bounded edge vocabularies and disclosure budgets reduce this signal, but cannot remove it entirely without also reducing planning utility. This is a structural trade-off rather than an implementation artifact.

\smallskip
\noindent\textbf{Prompt- and plan-level attack surfaces.}
The design also narrows two common attack surfaces. First, prompt-injection opportunities are reduced because the cloud planner never receives raw local artifacts, only schema-constrained abstractions. Second, unsafe or privacy-invasive plans are contained locally: every step is mediated by capability validation, policy checks, disclosure accounting, and optional human approval. Even if the planner proposes an overreaching plan, enforcement remains on the trusted local side.

\begin{figure}[!]
\centering
\includegraphics[width=0.8\columnwidth]{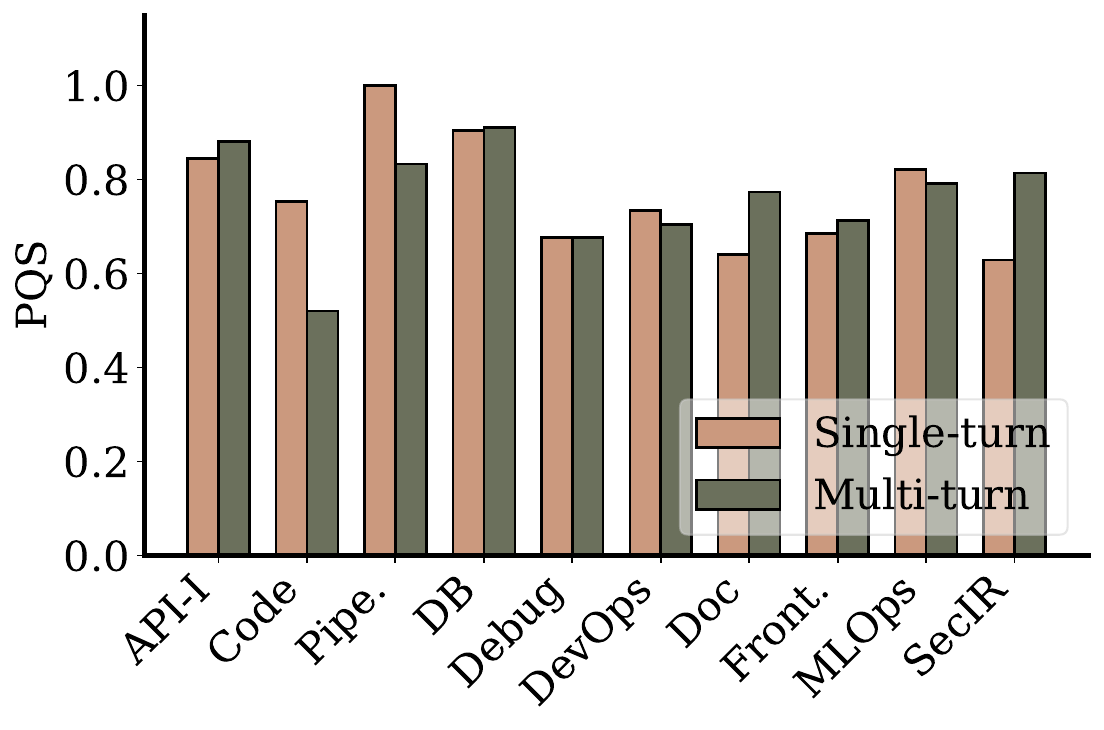}
\caption{Per-domain PQS under single- and multi-turn budgeting. Multi-turn most degrades data-pipeline and coding tasks. Repeated access to the same objects exhausts budgets.}
\label{fig:ablation-multiturn-pqs}
\end{figure}

\begin{figure}[!]
\centering
\includegraphics[width=0.8\columnwidth]{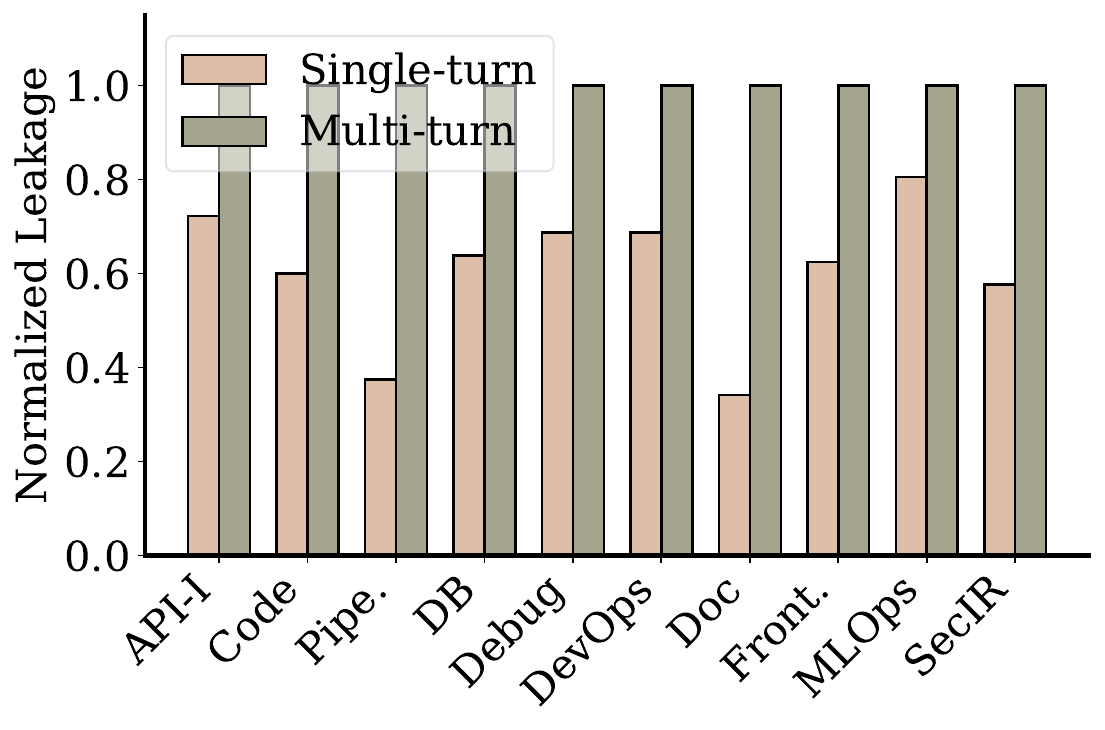}
\caption{Per-domain normalized leakage in single-/multi-turn modes. Multi-turn drives NL to 1.0 across domains.}
\label{fig:ablation-multiturn-nl}
\end{figure}

\subsection{Complexity and Scalability}
\label{sec:discussion:complexity}

Let $n$ be the number of tracked objects, $m$ the number of typed edges in the twin graph, and $k$ the number of plan steps. Let $|F_i|$ denote the token volume of object $i$, and let $d$ be the average number of retained schema fields per object.

\smallskip
\noindent\textbf{Projection cost.}
Stage~1 dominates local preprocessing when SLM-backed extraction is used, with cost
\[
O\!\left(\sum_i |F_i| \cdot c_{\text{SLM}}\right),
\]
where $c_{\text{SLM}}$ is the effective per-token extraction cost. Stages~2--4 are linear in the number of retained fields and relations,
\[
O(n d + m),
\]
since redaction, generalization, and schema projection operate over object-local metadata and typed graph structure.

\smallskip
\noindent\textbf{Budget and gatekeeping cost.}
Disclosure accounting scales with the number of newly referenced objects and fields per turn. Under bounded capability arity, this is effectively $O(1)$ per step and $O(k)$ per planning cycle; under a more general formulation it is upper bounded by $O(n)$ if every tracked object must be checked. Local validation and execution contribute $O(k\alpha)$, where $\alpha$ is the average cost of one validated capability execution.

\smallskip
\noindent \textbf{Cloud planning cost.}
Cloud planning is dominated by model inference over the sanitized twin and prompt state, which we write abstractly as $O(L)$, where $L$ captures planner inference cost. Because the twin is much smaller and more structured than raw context, this cost depends primarily on the size of the abstract state and the complexity of the requested workflow rather than the full volume of local artifacts.

Combining these terms, the per-cycle cost is
\[
O\!\left(\sum_i |F_i| \cdot c_{\text{SLM}} + nd + m + L + k\alpha\right).
\]

In practice, this yields two main regimes. For large local environments, Stage~1 extraction dominates because the local side must summarize many raw artifacts into schema form. For smaller environments or long-horizon workflows, cloud inference dominates. This matches the empirical results in \S\ref{sec:eval:overhead}, where total latency is driven primarily by SLM extraction and cloud API time, while deterministic projection and gatekeeping add negligible overhead.

\smallskip\noindent\textit{Scalability implication.}
The deterministic parts of the pipeline scale well, but end-to-end latency is bounded by whichever reasoning layer is active: local extraction for environment-heavy workloads, or cloud planning for reasoning-heavy workloads. Capability design and projection selectivity therefore matter not only for privacy, but also for performance.

\subsubsection{Toward Production Deployment.}
\label{sec:eval:deployment}

The current prototype transmits the sanitized twin $Z_t$ as a single JSON payload to the cloud planner, which reasons declaratively over the abstract graph without any execution environment.
A natural next step is \emph{cloud-side sandbox construction}: upon receiving $Z_t$, the cloud instantiates an ephemeral container whose directory structure, file stubs, and type annotations mirror the schema graph.
Concretely, each node in the typed abstract graph $G_t$ maps to a synthetic artifact (e.g., a code file containing only the declared \texttt{semantic\_class}, import stubs matching the \texttt{contains} tags, and size-appropriate placeholder bodies).
The planner then iterates with this sandbox twin, running linters, test harnesses, and static analyzers before producing a verified plan.

This design raises a synthetic-population privacy trade-off: the sandbox must contain enough semantic structure for meaningful tool execution, yet each synthetic artifact is derived solely from the bounded schema fields already disclosed under the budget $\tau(o)$.
We give a concise privacy argument showing that the sandbox construction preserves all existing guarantees.

\begin{proposition}[Synthetic-population privacy bound]
\label{prop:sandbox-privacy}
Let $\operatorname{Synth}: \mathcal{Z} \to \mathcal{S}$ be a (possibly randomized) sandbox generator that takes as input only the sanitized twin $Z_t = \Pi(X_t)$.
Let $R_{1:T}$ denote the sequence of tool outputs produced by the planner interacting with the sandbox content $S = \operatorname{Synth}(Z_t)$, where the planner's interaction policy is a function of $S$ alone and does not incorporate auxiliary knowledge correlated with $X_t$ beyond what $Z_t$ already discloses.
Then:
\begin{equation}
X_t \to Z_t \to S \to R_{1:T}.
\end{equation}
By the Markov property (conditional independence):
\begin{equation}
I(X_t;\, S, R_{1:T} \mid Z_t) = 0,
\end{equation}
i.e., the sandbox and all interaction transcripts are conditionally independent of the real environment given $Z_t$.
Consequently:
\begin{enumerate}
\item The $(k, \delta)$-anonymity bound (Theorem~\ref{thm:anonymity}) holds with identical parameters, since the adversary's posterior over object identity is fully determined by $Z_t$.
\item The $\epsilon$-unlinkability bound (Theorem~\ref{thm:unlinkability}) holds by the post-processing property of $D_\infty$: for any (randomized) function $f$, $D_\infty(f(M(o)) \| f(M(o'))) \leq D_\infty(M(o) \| M(o'))$ (see Appendix~\ref{app:proofs}, Theorem~\ref{thm:unlinkability}).
\item The composition bound (Theorem~\ref{thm:composition}) is unchanged, since sandbox interactions consume no additional budget beyond the fields already disclosed in $Z_t$.
\end{enumerate}
\end{proposition}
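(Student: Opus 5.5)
The plan is to reduce everything to a single structural fact, the Markov chain $X_t \to Z_t \to S \to R_{1:T}$, and then derive each consequence as a data-processing or post-processing statement.

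\emph{Step 1: the chain.} Since $\operatorname{Synth}$ reads only $Z_t$, its internal randomness can be modeled as an independent seed $U$, so that $S = \operatorname{Synth}(Z_t; U)$ with $U \perp X_t$. By hypothesis the planner's interaction policy depends on $S$ alone, so its own randomness can be modeled as a seed $V$ independent of $(X_t, Z_t, U)$. The tool outputs are then $R_{1:T} = g(S, V)$, because sandbox tools run on $S$, not on $X_t$. Together these give the factorization
\[
p(x, z, s, r) = p(x)\,p(z\mid x)\,p(s\mid z)\,p(r\mid s),
\]
which is exactly the chain. Conditional independence of $(S, R_{1:T})$ from $X_t$ given $Z_t$ follows, so $I(X_t; S, R_{1:T}\mid Z_t)=0$. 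This step needs $R$ to be a function of $(S, V)$ only, with no leakage of $X_t$ through shared channels. I would state this explicitly as the content of the ``no auxiliary knowledge correlated with $X_t$'' assumption.

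\emph{Step 2: anonymity.} Fix a target $o^*$, which is a function of $X_t$. For any adversary $\mathcal{A}'$ acting on $(Z_{1:T}, S, R_{1:T})$, Step 1 gives
\[
\Pr[\mathcal{A}'(Z,S,R)=o^* \mid Z] = \sum_{o}\Pr[o^*=o\mid Z]\,\Pr[\mathcal{A}'=o\mid Z],
\]
since $\mathcal{A}'$'s output is conditionally independent of $o^*$ given $Z$. Define a simulator $\mathcal{A}(Z)$ that samples $S \sim \operatorname{Synth}(Z)$ and $R$ itself, then runs $\mathcal{A}'$. This simulator has the same success probability as $\mathcal{A}'$. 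Theorem~\ref{thm:anonymity} then bounds that probability by $1/k+\delta$ with unchanged parameters.

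\emph{Step 3: unlinkability.} Let $M(o)$ be the twin-release mechanism from Theorem~\ref{thm:unlinkability}. The extended view is $f(M(o))$, where $f(z) = (z, \operatorname{Synth}(z), g(\operatorname{Synth}(z), V))$ is a randomized map whose randomness does not depend on $o$. For each event $E$ one has $\Pr[f(M(o))\in E] = \mathbb{E}_{z\sim M(o)}[\Pr[f(z)\in E]]$. This is pointwise at most $e^{\epsilon}$ times the same quantity under $o'$, which is the post-processing inequality for $D_\infty$. The likelihood ratio in \eqref{eq:unlinkability} therefore stays bounded by $\epsilon$.

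\emph{Step 4: composition.} Budget consumption $\Delta_t(o)$ is defined only by fields newly released across the trust boundary by the gatekeeper. Sandbox construction and tool execution happen cloud-side on $S$ and cause no new release, so $\Delta=0$ for these operations. The accounting in Theorem~\ref{thm:composition} is therefore unchanged.

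The main obstacle is Step 1, specifically justifying that the planner's auxiliary knowledge does not break the chain. If the adversary holds side information $A$ correlated with $X_t$, I would condition every statement on $A$ and check that Theorems~\ref{thm:anonymity}--\ref{thm:composition} already allow such $A$. In that case the chain $X_t \to (Z_t, A) \to S \to R$ still holds, and the same argument goes through.
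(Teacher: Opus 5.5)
Your proposal is correct and follows the paper's own route. You establish the chain $X_t \to Z_t \to S \to R_{1:T}$, and your independent-seed factorization makes the paper's appeal to the Markov property explicit. You then get (1) because $(S,R_{1:T})$ can be simulated from $Z_t$ alone, (2) from post-processing of $D_\infty$, and (3) because sandbox activity triggers no new gatekeeper release.

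One tightening for Step~3, inherited from the paper's own phrasing. Since $\Pi$ is deterministic, per-object releases have $D_\infty(M(o)\|M(o'))=\infty$ whenever $\Pi(o)|_D \neq \Pi(o')|_D$, so Theorem~\ref{thm:unlinkability} gives no per-object bound to post-process. Instead, apply post-processing (or simply your Step~2 simulator) to the joint two-session view under \textsf{same} versus \textsf{diff}, which is the pair of distributions the theorem actually bounds.
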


Two assumptions matter here. First, $\operatorname{Synth}$ uses only $Z_t$: it has no access to the real environment, no external knowledge base conditioned on the user's identity, and no retrieval-augmented generation over corpora that could correlate with the original content.
Second, the planner's interaction policy within the sandbox must not incorporate side information correlated with $X_t$ beyond what $Z_t$ discloses (e.g., through pretraining memorization of specific codebases).
Violations of either assumption break the Markov property and require separate analysis.

From an engineering perspective, lightweight sandbox runtimes such as E2B~\cite{e2b2025sandbox} (cold start ${\sim}$150\,ms, per-request ephemeral lifecycle) or gVisor-based isolation~\cite{skypilot2025sandbox} provide the necessary infrastructure.
The sandbox is destroyed after each planning cycle, ensuring no persistent state accumulates across sessions.
Integrating sandbox construction with the existing MCP capability interface~\cite{anthropic2024mcp} follows naturally from the architecture: each declared capability maps to a sandbox-side tool endpoint, and the gatekeeper validates sandbox tool invocations against the same policy as direct execution.
Stateful tools and asynchronous invocations require additional scoping logic that we do not address here.
We leave empirical evaluation of sandbox-assisted planning quality and the design of $\operatorname{Synth}$ generators that maximize tool utility within the privacy bound of Proposition~\ref{prop:sandbox-privacy} to future work.

\section{Conclusion}
\label{sec:conclusion}

We presented \sysname, a system for privacy-preserving cloud-assisted LLM planning. Instead of exposing raw local context to a strong remote planner, \sysname restricts planning to a bounded abstract view generated locally and enforced through capability mediation and multi-turn disclosure budgets.
\sysname achieves full sensitive-item non-disclosure while maintaining planning quality close to full-context cloud planning across 60 tasks, outperforming both simple redaction and local weak-model baselines in privacy-utility trade-offs. Our design is planner-agnostic and remains robust under ablations and budget-constrained multi-turn interaction.

Our results show that strong cloud planning does \textbf{NOT} require raw-context exposure. Planning-time observability should therefore be treated as a first-class systems interface in privacy-aware agents.

\begin{table}[t]
\caption{Notation.}
\label{tab:notation}
\centering
\vspace{3pt}
\begin{threeparttable}
\resizebox{\columnwidth}{!}{%
\begin{tabular}{c|l}
\toprule
\textbf{Symbol} & \multicolumn{1}{c}{\textbf{Meaning}} \\
\midrule
\rowcolor{kindsemcol!10}
  $X_t$ & Full local context at time $t$ \\
  $Z_t$ & Sanitized digital twin (= $G_t$) \\
\rowcolor{kindsemcol!10}
  $\Pi(\cdot)$ & Privacy projection operator (4-stage pipeline) \\
  $\piout$ & Output sanitizer (Stages 2--4 on execution results) \\
\rowcolor{kindsemcol!10}
  $G_t = (\mathcal{V}_t, \mathcal{E}_t, \mathcal{A}_t)$ & Typed abstract graph (nodes, edges, attributes) \\
  $\mathcal{C} = \{c_1, \ldots, c_n\}$ & Local capability catalog \\
\rowcolor{kindsemcol!10}
  $P_t$ & Cloud-generated plan at time $t$ \\
  $\Phi(\cdot)$ & Cloud planner function \\
\rowcolor{kindsemcol!10}
  $Y_t$ & Sanitized execution result \\
  $\Psi(\cdot)$ & Local gatekeeper + executor \\
\rowcolor{kindsemcol!10}
  $\Omega$ & Policy set \\
$ B(o)$ & Cumulative disclosure budget for object $o$ \\
\rowcolor{kindsemcol!10}
 $\Delta_t(o)$ & New disclosure at turn $t$ for object $o$ \\
 $\tau(o)$ & Disclosure threshold for object $o$ \\
\rowcolor{kindsemcol!10}
 $\mathcal{L}$ & System-wide leakage: $\max_o B(o)$ \\
 $\epsilon$ & Privacy budget bound \\
\rowcolor{kindsemcol!10}
 $w_f$ & Base disclosure cost of schema field $f$ \\
 $\text{grain}(\mathcal{C})$ & Avg.\ weighted schema fields per capability call \\
\bottomrule
\end{tabular}%
}
\begin{tablenotes}
\small
\item[-] \textit{Upper part}: system components. \textit{Lower part}: privacy analysis.
\end{tablenotes}
\end{threeparttable}
\end{table}

\bibliographystyle{ACM-Reference-Format}
\bibliography{refs}

\appendix

\section{Disclosure Cost Accounting}
\label{app:disclosure}

The weighted field-cost model assigns each schema field a base cost $w_f$ reflecting its re-identification potential. For an object $o$ with attribute set $A(o)$, the initial disclosure cost is
\begin{equation}
\Delta_0(o) = \sum_{f \in A(o)} w_f.
\end{equation}

For subsequent turns, only new reveals incur additional cost:
\begin{equation}
\Delta_t(o) = \sum_{f \in A_t(o) \setminus A_{1:t-1}(o)} w_f,
\end{equation}
where $A_t(o)$ is the set of fields disclosed at turn $t$ and $A_{1:t-1}(o)$ is the set disclosed previously. Thus, retransmitting the same twin state without new fields incurs zero additional disclosure cost.

\section{Security Proofs}
\label{app:proofs}

This appendix formalizes the security properties stated as design goals in \S\ref{sec:system:threat}. We prove $(k,\delta)$-anonymity under schema projection (Theorem~\ref{thm:anonymity}), $\epsilon$-unlinkability under fresh session IDs (Theorem~\ref{thm:unlinkability}), and a composition bound on cumulative disclosure across $T$ turns (Theorem~\ref{thm:composition}).

\subsection{Notation and Setup}

For the proofs below, we abstract away implementation details and work directly with the projected attribute space induced by the planner-visible abstraction.

Let $\mathcal{O} = \{o_1, \ldots, o_n\}$ be the set of real-world objects in the local environment. The schema projection $\Pi$ maps each object to a planner-visible representation $z = \Pi(o)$ drawn from a finite attribute space
\[
\mathcal{Z} = \mathcal{K} \times \mathcal{S} \times 2^{\mathcal{T}} \times 2^{\mathcal{U}} \times \mathcal{F} \times \mathcal{R},
\]
where $\mathcal{K}$ is the set of kind labels ($|\mathcal{K}| = 6$), $\mathcal{S}$ is the set of semantic-class labels, $\mathcal{T}$ is the set of content tags, $\mathcal{U}$ is the set of usability tags, $\mathcal{F}$ is the set of freshness buckets ($|\mathcal{F}| = 3$), and $\mathcal{R}$ is the set of sensitivity levels ($|\mathcal{R}| = 3$).

Let $\text{fields}(z)$ denote the set of fields present in $z$. For a representation $z$ with disclosed field subset $D \subseteq \text{fields}(z)$, define the \emph{equivalence class}
\begin{equation}
\text{EC}(z, D) = \{o' \in \mathcal{O} \mid \Pi(o')|_D = z|_D\},
\end{equation}
that is, the set of objects whose projected representation agrees with $z$ on all disclosed fields.

The disclosure budget for object $o$ after $T$ turns is $B_T(o) = \sum_{t=1}^{T} \Delta_t(o)$, capped by threshold $\tau(o)$ (Eq.~\ref{eq:budget}).

\paragraph{Auxiliary-knowledge model.}
For Theorem~\ref{thm:anonymity}, we model adversarial side information by a random variable $\mathcal{I}$. We assume a bounded-likelihood-ratio condition inside each equivalence class: there exists a bound $\eta \geq 1$ such that for every disclosed pattern $x$, every $o, o' \in \mathcal{O}_x := \text{EC}(x, D)$, and every realization $i$ of $\mathcal{I}$,
\begin{equation}
\label{eq:eta-bounded-sideinfo}
\frac{\Pr[\mathcal{I} = i \mid O = o]}{\Pr[\mathcal{I} = i \mid O = o']} \leq \eta.
\end{equation}
Thus, auxiliary knowledge may bias the posterior within an equivalence class, but cannot arbitrarily single out one member.

\paragraph{Cost calibration model.}
For Theorems~\ref{thm:unlinkability} and~\ref{thm:composition}, let $F_t$ denote the set of newly disclosed fields about object $o$ at turn $t$, so that $\Delta_t(o) = \sum_{f \in F_t} w_f$. For each field $f$, define its \emph{worst-case branching factor}
\begin{equation}
\label{eq:branching-factor}
b_f := \max_{D, x} \bigl|\{v \in \mathcal{V}_f : \exists\, o \in \text{EC}(x, D) \text{ with } \Pi_f(o) = v\}\bigr|,
\end{equation}
where $\mathcal{V}_f$ is the value domain of field $f$, and the maximum ranges over all previously disclosed field sets $D$ and patterns $x$ on $D$. We assume field costs are calibrated as
\begin{equation}
\label{eq:cost-calibration}
w_f \geq \ln b_f.
\end{equation}
Hence, revealing field $f$ can refine an existing equivalence class by at most a multiplicative factor $e^{w_f}$ in the worst case.

\subsection{Theorem 1: $(k,\delta)$-Anonymity under Schema Projection}

\renewcommand{\thetheorem}{\arabic{theorem}}
\setcounter{theorem}{0}
\begin{theorem}[Re-identification Bound]
\label{thm:anonymity}
Let the adversary observe $z^* = \Pi(o^*)$ with disclosed field set $D^*$ such that $\sum_{f \in D^*} w_f \leq \tau(o^*)$ (enforced by the gatekeeper's per-object budget). Assume:
\begin{enumerate}
\item the prior $\mu$ is uniform within each equivalence class, that is, conditional on $\Pi(O)|_{D^*} = x$, every object in $\mathcal{O}_x$ is equally likely a priori;
\item the adversary's side information $\mathcal{I}$ satisfies the bounded-likelihood-ratio condition~\eqref{eq:eta-bounded-sideinfo}.
\end{enumerate}
If $|\text{EC}(z^*, D^*)| \geq k$, then for any adversary $\mathcal{A}$,
\begin{equation}
\Pr[\mathcal{A}(z^*|_{D^*}, \mathcal{I}) = o^*] \leq \frac{\eta}{\eta + k - 1}.
\end{equation}
Equivalently,
\begin{equation}
\Pr[\mathcal{A}(z^*|_{D^*}, \mathcal{I}) = o^*] \leq \frac{1}{k} + \delta,
\end{equation}
where
\begin{equation}
\delta := \frac{\eta}{\eta + k - 1} - \frac{1}{k}
= \frac{(\eta - 1)(k - 1)}{k(\eta + k - 1)}.
\end{equation}
\end{theorem}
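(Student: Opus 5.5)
The plan is a Bayesian argument carried out entirely inside the equivalence class $\mathcal{O}_x := \text{EC}(z^*, D^*)$, where $x = z^*|_{D^*}$. The adversary's input is the pair $(x, i)$, with $i$ the realization of $\mathcal{I}$. The optimal strategy against any fixed input is the MAP rule. So for every $\mathcal{A}$,
\[
\Pr[\mathcal{A}(x,\mathcal{I}) = o^*] \le \mathbb{E}_{x,i}\bigl[\max_{o} \Pr[O = o \mid x, i]\bigr].
\]
It therefore suffices to bound the maximal posterior pointwise, for every $(x,i)$, by $\eta/(\eta+k-1)$. A randomized $\mathcal{A}$ is handled identically, since its success probability is a convex combination of deterministic guesses.

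Fix $(x,i)$. Any object outside $\mathcal{O}_x$ has posterior zero, because its projection disagrees with $x$ on $D^*$. For $o \in \mathcal{O}_x$, Bayes' rule gives
\[
\Pr[O=o \mid x,i] = \frac{\mu(o\mid x)\Pr[\mathcal{I}=i\mid O=o]}{\sum_{o'\in\mathcal{O}_x}\mu(o'\mid x)\Pr[\mathcal{I}=i\mid O=o']}.
\]
This uses the fact that $\mathcal{I}$ depends on the world only through $O$, which is implicit in how the side-information condition is stated. Assumption~1 makes $\mu(\cdot\mid x)$ uniform on $\mathcal{O}_x$, so it cancels. Write $\ell(o) = \Pr[\mathcal{I}=i\mid O=o]$ and let $o_{\max}$ attain the largest $\ell$. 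Condition~\eqref{eq:eta-bounded-sideinfo} gives $\ell(o') \ge \ell(o_{\max})/\eta$ for every $o'\in\mathcal{O}_x$. Hence
\[
\max_o \Pr[O=o\mid x,i] \le \frac{\ell(o_{\max})}{\ell(o_{\max}) + (|\mathcal{O}_x|-1)\ell(o_{\max})/\eta} = \frac{\eta}{\eta+|\mathcal{O}_x|-1}.
\]
This quantity is decreasing in $|\mathcal{O}_x|$, so $|\mathcal{O}_x|\ge k$ yields the claimed bound $\eta/(\eta+k-1)$. Taking the expectation over $(x,i)$ preserves it.

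The $\delta$ form is an algebraic identity: subtracting $1/k$ from $\eta/(\eta+k-1)$ and placing both terms over the denominator $k(\eta+k-1)$ gives the numerator $k\eta - \eta - k + 1 = (\eta-1)(k-1)$. The budget hypothesis $\sum_{f\in D^*} w_f \le \tau(o^*)$ plays no role in the inequality itself. Its only job is to justify, in the deployed system, why $D^*$ stays small enough for $|\text{EC}|\ge k$ to be plausible. I will say this explicitly rather than pretend it enters the calculation.

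The main obstacle is making the conditioning precise. The hypothesis $|\text{EC}(z^*,D^*)|\ge k$ refers to the true target. I will treat the event "$\Pi(O)|_{D^*}=x$" as the adversary's observation, so $\mathcal{O}_x$ is exactly the support of the posterior. I also need $\mathcal{I}$ to be conditionally independent of the disclosed pattern given $O$. If that fails, $\Pr[\mathcal{I}=i\mid O=o, x]$ would need the $\eta$-bound instead, and I would simply restate condition~\eqref{eq:eta-bounded-sideinfo} in that conditional form. The Bayes step goes through unchanged.
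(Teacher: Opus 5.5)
Your proposal is correct and follows the paper's proof essentially step for step: Bayes' rule restricted to the equivalence class, cancellation of the uniform within-class prior, the $\eta$ likelihood-ratio bound yielding $\eta/(\eta+m-1)$, monotonicity in $m=|\text{EC}(z^*,D^*)|\ge k$, and MAP optimality. Your additions are accurate: the explicit $\delta$ algebra, the observation that the budget hypothesis plays no role in the inequality, and the treatment of randomized adversaries. The conditional-independence caveat you raise is actually automatic, since $\Pi(O)|_{D^*}$ is a deterministic function of $O$.
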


\begin{proof}
Fix the disclosed pattern $x := z^*|_{D^*}$ and write $E := \mathcal{O}_x = \text{EC}(z^*, D^*)$, $m := |E|$. By assumption, $m \geq k$.

\textit{Step 1 (Posterior inside the equivalence class).}
Since all objects in $E$ induce the same disclosed pattern $x$, Bayes' rule gives, for any $o \in E$ and side-information realization $i$,
\begin{equation}
\begin{split}
\Pr[O = o \mid \Pi(O)|_{D^*} = x, \mathcal{I} = i] = {}\\
\frac{
\Pr[\mathcal{I} = i \mid O = o]\,
\Pr[O = o \mid \Pi(O)|_{D^*} = x]
}{
\sum_{o' \in E}
\Pr[\mathcal{I} = i \mid O = o']\,
\Pr[O = o' \mid \Pi(O)|_{D^*} = x]
}.
\end{split}
\end{equation}
By the uniform-within-class prior assumption,
\[
\Pr[O = o \mid \Pi(O)|_{D^*} = x] = \frac{1}{m}
\]
for all $o \in E$. The $1/m$ factors cancel, yielding
\begin{equation}
\label{eq:posterior-sideinfo}
\Pr[O = o \mid \Pi(O)|_{D^*} {=} x, \mathcal{I} {=} i]
= \frac{\Pr[\mathcal{I} = i \mid O = o]}{\sum_{o' \in E} \Pr[\mathcal{I} = i \mid O = o']}.
\end{equation}

\textit{Step 2 (Bounding the maximum posterior mass).}
Let $\hat{o} \in E$ maximize the posterior in~\eqref{eq:posterior-sideinfo}. By the likelihood-ratio bound~\eqref{eq:eta-bounded-sideinfo}, for every $o' \in E \setminus \{\hat{o}\}$,
\[
\Pr[\mathcal{I} = i \mid O = o'] \geq \frac{1}{\eta}\Pr[\mathcal{I} = i \mid O = \hat{o}].
\]
Therefore,
\begin{equation}
\sum_{o' \in E} \Pr[\mathcal{I} = i \mid O = o']
\geq \Pr[\mathcal{I} = i \mid O = \hat{o}] + (m-1)\cdot \frac{1}{\eta}\Pr[\mathcal{I} = i \mid O = \hat{o}].
\end{equation}
Substituting into~\eqref{eq:posterior-sideinfo},
\begin{equation}
\max_{o \in E} \Pr[O = o \mid x, i]
\leq \frac{1}{1 + (m-1)/\eta}
= \frac{\eta}{\eta + m - 1}.
\end{equation}
Since $m \geq k$ and $u \mapsto \eta/(\eta + u - 1)$ is decreasing in $u$,
\begin{equation}
\max_{o \in E} \Pr[O = o \mid x, i] \leq \frac{\eta}{\eta + k - 1}.
\end{equation}

\textit{Step 3 (Optimal adversary success probability).}
Given $(x,i)$, the optimal re-identification strategy is MAP decoding, whose success probability equals the maximum posterior mass. Hence the bound applies to every adversary $\mathcal{A}$:
\[
\Pr[\mathcal{A}(z^*|_{D^*}, \mathcal{I}) = o^*] \leq \frac{\eta}{\eta + k - 1}.
\]
The equivalent form with $\delta$ follows immediately.
\end{proof}

\noindent\textit{Interpretation.}
When $\eta = 1$, the bound reduces to the classical $1/k$ anonymity guarantee. Empirically, the re-identification experiment in \S\ref{sec:eval:reid} corresponds to the unrestricted case in which all twin fields are disclosed; the budgeted setting limits the disclosed field set and enlarges the resulting equivalence class.

\subsection{Theorem 2: $\epsilon$-Unlinkability across Sessions}

\begin{theorem}[Cross-Session Unlinkability]
\label{thm:unlinkability}
Let sessions $s_1, s_2$ assign fresh random object IDs independently. For a target object $o$, let $D^{(1)}$ and $D^{(2)}$ be the disclosed field sets in the two sessions, and let $U := D^{(1)} \cap D^{(2)}$ be the overlap of fields visible in both sessions. Denote the overlap-pattern random variable $X_U := \Pi(O)|_U$, where $O$ is drawn from prior $\mu$. Assume $X_U$ is uniform on its support $\mathcal{Z}_U \subseteq \prod_{f \in U} \mathcal{V}_f$, that is,
\[
\Pr[X_U = x] = \frac{1}{|\mathcal{Z}_U|}
\]
for all $x \in \mathcal{Z}_U$. Then for any adversary $\mathcal{A}$,
\begin{equation}
\ln \frac{\Pr[\mathcal{A}(Z^{(1)}, Z^{(2)}) = 1 \mid \text{same}]}{\Pr[\mathcal{A}(Z^{(1)}, Z^{(2)}) = 1 \mid \text{diff}]} \leq \epsilon,
\end{equation}
where $\epsilon = \ln |\mathcal{Z}_U|$. Moreover, under the cost calibration assumption~\eqref{eq:cost-calibration} and budget enforcement $\sum_{f \in D^{(i)}} w_f \leq \tau(o)$,
\begin{equation}
\label{eq:eps-budget}
\epsilon \leq \sum_{f \in U} \ln b_f \leq \sum_{f \in U} w_f \leq \tau(o).
\end{equation}
\end{theorem}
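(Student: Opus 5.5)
The plan is to reduce everything the linkage adversary sees to the overlap pattern, and then bound a likelihood ratio on that finite, uniformly distributed space.

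\emph{Step 1: discard the identifiers.} Object IDs are drawn freshly and independently in each session. So, conditional on the disclosed field values, the ID components of $(Z^{(1)}, Z^{(2)})$ have the same joint distribution under \emph{same} and \emph{diff}. By post-processing of the max-divergence $D_\infty$, the adversary's likelihood ratio is at most the ratio of the joint laws of the disclosed patterns $(Z^{(1)}|_{D^{(1)}}, Z^{(2)}|_{D^{(2)}})$.

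\emph{Step 2: isolate the overlap fields.} Fields disclosed in only one session carry no cross-session consistency signal. I will treat $X_U$ as the sole linking statistic, so the relevant ratio compares two events:
\begin{itemize}
\item under \emph{same}, the two overlap patterns coincide with probability $1$;
\item under \emph{diff}, the two overlap patterns are independent draws from the uniform law on $\mathcal{Z}_U$.
\end{itemize}
I would state this conditional-independence of the non-overlap fields from the same/diff label explicitly as part of the modelling assumption.

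\emph{Step 3: bound the ratio pointwise.} Fix an overlap pair $(x,x')$. Under \emph{same}, its probability is $\mathbf{1}[x=x']/|\mathcal{Z}_U|$. Under \emph{diff}, it is $1/|\mathcal{Z}_U|^2$. The pointwise ratio on the support is therefore at most $|\mathcal{Z}_U|$.

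For an arbitrary (randomized) test $\mathcal{A}$, the acceptance probability is an average of these pointwise masses weighted by the test's acceptance probabilities. A ratio of two averages is bounded by the largest pointwise ratio, which gives $\ln$ of the ratio $\leq \ln|\mathcal{Z}_U| = \epsilon$.

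\emph{Step 4: connect $\epsilon$ to the budget.}
\begin{enumerate}
\item Uniformity puts $\mathcal{Z}_U$ inside a product of per-field reachable value sets. Applying the definition~\eqref{eq:branching-factor} field by field, revealing the fields of $U$ sequentially multiplies the number of possible patterns by at most $b_f$ per field. This gives $|\mathcal{Z}_U| \leq \prod_{f\in U} b_f$, hence $\epsilon \leq \sum_{f \in U}\ln b_f$.
\item Calibration~\eqref{eq:cost-calibration} gives $\ln b_f \leq w_f$ for each field.
\item Since $U \subseteq D^{(1)}$ and all weights are nonnegative, budget enforcement yields $\sum_{f\in U} w_f \leq \sum_{f\in D^{(1)}} w_f \leq \tau(o)$.
\end{enumerate}

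\emph{Main obstacle: Step 3.} The pointwise ratio is unbounded in the reverse direction, since under \emph{same} mismatched overlap pairs have probability zero. The statement only bounds $\Pr[\cdot\mid\text{same}]/\Pr[\cdot\mid\text{diff}]$, so I must argue one-sidedly. This means verifying that the \emph{diff} mass is never zero where the \emph{same} mass is positive, which holds because $\Pr[x=x'\mid\text{diff}] = 1/|\mathcal{Z}_U| > 0$.
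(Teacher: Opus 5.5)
Your likelihood-ratio argument (Steps 1--3) is essentially the paper's. The paper fixes an acceptance region $R$, observes that under \emph{same} only diagonal pairs $(x,x)$ carry mass, lower-bounds the \emph{diff} mass by $\sum_{(x,x)\in R}\Pr[X_U=x]^2$, and gets a ratio of at most $|\mathcal{Z}_U|$. Your pointwise bound, averaged over acceptance probabilities, is the same computation and handles randomized tests directly.

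The budget step is where you genuinely differ, and your route is the better one. The paper bounds $|\mathcal{Z}_U|\le\prod_{f\in U}|\mathcal{V}_f|$ and then asserts $|\mathcal{V}_f|\le b_f$. But $b_f$ counts values \emph{realized} inside an equivalence class, so in general $b_f\le|\mathcal{V}_f|$, and the paper's inequality points the wrong way. Your sequential refinement gives $|\mathcal{Z}_U|\le\prod_{f\in U}b_f$ directly, which is exactly the middle term of the statement:
\begin{itemize}
\item the first field has at most $b_{f_1}$ realized values, using $D=\emptyset$ in the definition of $b_{f_1}$;
\item each realized pattern on $f_1,\dots,f_j$ has at most $b_{f_{j+1}}$ realized extensions.
\end{itemize}
What makes this work is that $\mathcal{Z}_U$ is a set of realized patterns, not uniformity as you wrote.

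On Step 2, you are right that the reduction to $X_U$ does not follow from fresh IDs. The paper simply asserts that any linking rule depends only on the overlap attributes. Write $Y_i:=\Pi(O)|_{D^{(i)}\setminus U}$. The likelihood ratio at a diagonal transcript is then $\frac{p(y_1,y_2\mid x)}{p(y_1\mid x)\,p(y_2\mid x)}\cdot\frac{1}{p(x)}$. Under uniformity, the bound $|\mathcal{Z}_U|$ therefore holds for every adversary if and only if $Y_1\perp Y_2\mid X_U$ under $\mu$. For instance, with $U=\emptyset$ and two distinct but correlated fields disclosed one per session, the ratio exceeds $e^{0}=1$.

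So drop the sentence claiming that one-sided fields carry no signal, and state this conditional independence as an explicit added hypothesis. With it your argument is complete. The paper's proof silently needs the same hypothesis.
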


\begin{proof}
Let $T_1, T_2$ denote the transcripts seen by the adversary in the two sessions, restricted to the overlap fields $U$. Fresh random IDs carry no cross-session information, so any linking rule depends only on the overlap attributes.

\textit{Step 1 (Distribution under ``same'' and ``diff'').}
The projection $\Pi$ is deterministic, so under the hypothesis \textsf{same} (both sessions involve the same object $O$), both transcripts produce the identical overlap pattern: $(T_1, T_2) = (X_U, X_U)$. Under the hypothesis \textsf{diff} (two independent objects $O_1, O_2 \sim \mu$), the overlap patterns are i.i.d.: $(T_1, T_2) = (X_U^{(1)}, X_U^{(2)})$ with $X_U^{(1)} \perp X_U^{(2)}$.

\textit{Step 2 (Max-divergence computation).}
For any adversary $\mathcal{A}$ with acceptance region $R \subseteq \mathcal{Z}_U \times \mathcal{Z}_U$,
\begin{equation}
\Pr[\mathcal{A} = 1 \mid \text{same}] = \sum_{x:\, (x,x) \in R} \Pr[X_U = x],
\end{equation}
\begin{equation}
\Pr[\mathcal{A} = 1 \mid \text{diff}] = \sum_{(x,y) \in R} \Pr[X_U = x]\Pr[X_U = y]
\geq \sum_{x:\, (x,x) \in R} \Pr[X_U = x]^2.
\end{equation}
The inequality holds because the off-diagonal terms in $R$ only increase the denominator. Under the uniform-on-support assumption, $\Pr[X_U = x] = 1/N$ where $N = |\mathcal{Z}_U|$. Hence
\begin{equation}
\frac{\Pr[\mathcal{A} = 1 \mid \text{same}]}{\Pr[\mathcal{A} = 1 \mid \text{diff}]}
\leq
\frac{\sum_{x:\, (x,x) \in R} 1/N}{\sum_{x:\, (x,x) \in R} 1/N^2}
= N = |\mathcal{Z}_U|.
\end{equation}
This bound is tight: the adversary $\mathcal{A}^*(x,y) = \mathbf{1}[x = y = x_0]$ for any fixed $x_0 \in \mathcal{Z}_U$ achieves ratio exactly $N$.

\textit{Step 3 (Budget connection).}
The support size satisfies $|\mathcal{Z}_U| \leq \prod_{f \in U} |\mathcal{V}_f|$. Under the cost calibration~\eqref{eq:cost-calibration}, $|\mathcal{V}_f| \leq b_f \leq e^{w_f}$, so
\begin{equation}
\epsilon = \ln |\mathcal{Z}_U|
\leq \sum_{f \in U} \ln |\mathcal{V}_f|
\leq \sum_{f \in U} w_f.
\end{equation}
Since $U \subseteq D^{(i)}$ for each session $i$, and the budget enforces $\sum_{f \in D^{(i)}} w_f \leq \tau(o)$, we obtain $\epsilon \leq \tau(o)$.
\end{proof}

\noindent\textit{Interpretation.}
The parameter $\epsilon$ is directly bounded by the overlap of disclosed fields across sessions, and therefore by the disclosure budget $\tau(o)$. Tighter budgets reduce the attribute overlap available for linkage. When no fields overlap across sessions, $\epsilon = 0$ and linkage through planner-visible attributes becomes impossible.

\noindent\textit{Remark (non-uniform distributions).}
For non-uniform $X_U$, the worst-case adversary achieves ratio $1/\min_{x \in \text{supp}} \Pr[X_U = x]$, giving $\epsilon = -\ln \min_x \Pr[X_U = x]$. This can exceed $\ln |\mathcal{Z}_U|$ when the distribution is skewed. The budget connection still bounds the number of distinguishable overlap patterns, even when their frequencies are not uniform.

\subsection{Theorem 3: Composition Bound for Multi-Turn Disclosure}

\begin{theorem}[Sequential Composition]
\label{thm:composition}
Consider an object $o$ observed across $T$ turns. Let $D_t$ be the cumulative disclosed field set after turn $t$, and let $\text{EC}_t := \text{EC}(\Pi(o), D_t)$ be the corresponding equivalence class. Assume the field costs satisfy the calibration condition~\eqref{eq:cost-calibration}. Then
\begin{equation}
\label{eq:composition-exp}
|\text{EC}_T| \geq |\text{EC}_0| \exp\!\Bigl(-\sum_{t=1}^{T} \Delta_t(o)\Bigr).
\end{equation}
In particular, if the gatekeeper enforces $B_T(o) = \sum_{t=1}^{T} \Delta_t(o) \leq \tau(o)$, then
\begin{equation}
\label{eq:composition-budget}
|\text{EC}_T| \geq |\text{EC}_0|\, e^{-\tau(o)}.
\end{equation}
\end{theorem}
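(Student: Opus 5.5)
The plan is to prove a one-step refinement inequality and then telescope it over turns. Fix $o$ and write $D_t = D_{t-1} \cup F_t$, where $F_t$ is the set of newly disclosed fields at turn $t$, so that $\Delta_t(o) = \sum_{f \in F_t} w_f$ (fields already in $D_{t-1}$ cost nothing, by the accounting of Appendix~\ref{app:disclosure}). Equivalence classes are nested, $\text{EC}_t \subseteq \text{EC}_{t-1}$, because agreeing with $\Pi(o)$ on $D_t$ implies agreeing on $D_{t-1}$. The key claim is
\[
|\text{EC}_t| \geq |\text{EC}_{t-1}|\, e^{-\Delta_t(o)}.
\]
Iterating it from $t=1$ to $T$ and multiplying gives \eqref{eq:composition-exp}. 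Then \eqref{eq:composition-budget} follows from $B_T(o) \leq \tau(o)$ and the monotonicity of $e^{-u}$.

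To get the one-step claim, I would add the fields of $F_t$ one at a time, say $f_1, \ldots, f_r$, producing intermediate field sets $D_{t-1} = D^{(0)} \subset D^{(1)} \subset \cdots \subset D^{(r)} = D_t$. Adding a single field $f$ to a disclosed set $D$ with pattern $x = \Pi(o)|_D$ partitions $\text{EC}(x, D)$ according to the value $\Pi_f(o')$. By the definition \eqref{eq:branching-factor}, at most $b_f$ values occur, so there are at most $b_f$ nonempty blocks, and the block containing $o$ is $\text{EC}(\Pi(o), D \cup \{f\})$.

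This is where I expect the main obstacle. A partition into $b_f$ blocks only guarantees that \emph{some} block has size at least $|\text{EC}|/b_f$, not that the block containing the specific target $o$ does. A pure counting argument therefore fails in the worst case: the true block could be a singleton. I see two options:
\begin{enumerate}
\item Interpret the statement as a bound on the expected or typical class size, averaging over $o$ drawn uniformly from the previous class. An averaging or entropy argument does give $\mathbb{E}[\ln|\text{EC}_t|] \geq \ln|\text{EC}_{t-1}| - \ln b_f$, since $H(\Pi_f \mid \text{class}) \leq \ln b_f$.
\item Read the calibration assumption $w_f \geq \ln b_f$ as stated in the paper's gloss ("revealing field $f$ can refine an existing equivalence class by at most a multiplicative factor $e^{w_f}$"). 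That is, take it as a per-class balance condition: each value block has size at least $|\text{EC}|/b_f$.
\end{enumerate}
I would adopt option 2 explicitly as the operative form of \eqref{eq:cost-calibration}, and remark that option 1 yields the averaged version without it. Under option 2, $|\text{EC}(\Pi(o), D\cup\{f\})| \geq |\text{EC}(\Pi(o),D)|/b_f \geq |\text{EC}(\Pi(o),D)|\, e^{-w_f}$.

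Chaining over $f_1,\ldots,f_r$ multiplies these factors, giving $|\text{EC}_t| \geq |\text{EC}_{t-1}| \exp(-\sum_{f\in F_t} w_f) = |\text{EC}_{t-1}| e^{-\Delta_t(o)}$. Telescoping over $t$ gives $|\text{EC}_T| \geq |\text{EC}_0| \exp(-\sum_t \Delta_t(o))$. Finally, since the gatekeeper's check \eqref{eq:budget} keeps $B_T(o) \leq \tau(o)$, we get $|\text{EC}_T| \geq |\text{EC}_0| e^{-\tau(o)}$. This can be fed into Theorem~\ref{thm:anonymity} with $k = \lceil |\text{EC}_0| e^{-\tau(o)} \rceil$.
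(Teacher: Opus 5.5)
Your skeleton matches the paper's proof: nesting $\text{EC}_t\subseteq\text{EC}_{t-1}$, a per-field refinement bound, chaining over the fields of $F_t$, telescoping over turns, then the budget. The obstacle you flagged is exactly where the paper goes wrong. Its Step~2 says that because $f$ splits $\text{EC}_{t-1}$ into at most $b_f$ subclasses, ``the subclass selected by the true value of field $f$ therefore has size at least $|\text{EC}_{t-1}|/b_f$.'' That is the inference you correctly refused, and the gloss after \eqref{eq:cost-calibration} repeats it. This is not merely a gap in the argument: the statement is false under \eqref{eq:cost-calibration} as written. Take $\mathcal{O}=\{o_1,\dots,o_N\}$ with $N\ge 3$ and a field $f$ with $\Pi_f(o_1)=a$ and $\Pi_f(o_i)=b$ for $i\ge 2$. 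Then $b_f=2$, so $w_f=\ln 2$ is calibrated. With $D_0=\emptyset$, $D_1=\{f\}$ and target $o_1$, you get $|\text{EC}_0|=N$ and $|\text{EC}_1|=1$. But \eqref{eq:composition-exp}, and \eqref{eq:composition-budget} with $\tau(o_1)=\ln 2$, demand $|\text{EC}_1|\ge N/2$.

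Consequently your option~2 cannot be presented as ``the operative form'' of \eqref{eq:cost-calibration}. It is a strictly stronger hypothesis and should be stated as one, for example: $|\text{EC}(\Pi(o'),D\cup\{f\})|\ge e^{-w_f}\,|\text{EC}(x,D)|$ for all $D$, $x$ and $o'\in\text{EC}(x,D)$. This calibrates $w_f$ against the worst-case shrink ratio of a class rather than the number of realized values. That ratio is at least $b_f$, so the original calibration is implied. Under this assumption your chaining and telescoping are complete, and this is what the paper's proof actually establishes. Your remark feeding $k=\lceil|\text{EC}_0|e^{-\tau(o)}\rceil$ into Theorem~\ref{thm:anonymity} is then valid.

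Your option~1 is also correct, and it is the honest result under the literal hypothesis. Suppose $o$ is uniform on $\text{EC}_0$ and the choice of fields to disclose does not depend on undisclosed values. Then $o$ stays uniform on each successive class. Each field lowers $\mathbb{E}\ln|\text{EC}|$ by at most the entropy of $\Pi_f$ within the current class, which is at most $\ln b_f\le w_f$. This gives $\mathbb{E}\ln|\text{EC}_T|\ge\ln|\text{EC}_0|-\mathbb{E}[B_T(o)]$. That bound controls the geometric-mean class size, not the worst case, so it does not supply the deterministic $k$ that Theorem~\ref{thm:anonymity} needs.
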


\begin{proof}
For each turn $t$, let $F_t := D_t \setminus D_{t-1}$ be the set of newly disclosed fields, so that $\Delta_t(o) = \sum_{f \in F_t} w_f$.

\textit{Step 1 (Monotone shrinkage).}
Because $D_{t-1} \subseteq D_t$, every object consistent with the disclosures at turn $t$ is also consistent at turn $t-1$. Hence $\text{EC}_t \subseteq \text{EC}_{t-1}$ and $|\text{EC}_t| \leq |\text{EC}_{t-1}|$.

\textit{Step 2 (Per-field refinement bound).}
Fix turn $t$ and suppose one additional field $f \in F_t$ is revealed. Within the pre-existing class $\text{EC}_{t-1}$, disclosure of $f$ partitions that class into at most $b_f$ subclasses by definition of the branching factor~\eqref{eq:branching-factor}. The subclass selected by the true value of field $f$ therefore has size at least $|\text{EC}_{t-1}|/b_f$. Using the cost calibration~\eqref{eq:cost-calibration}, $b_f \leq e^{w_f}$, so revealing field $f$ shrinks the class by at most a factor $e^{w_f}$:
\begin{equation}
|\text{EC}^{(f)}| \geq |\text{EC}_{t-1}|\, e^{-w_f}.
\end{equation}
Applying this sequentially to all newly revealed fields in $F_t$ gives
\begin{equation}
\label{eq:per-turn-shrink}
|\text{EC}_t| \geq |\text{EC}_{t-1}| \exp\!\Bigl(-\sum_{f \in F_t} w_f\Bigr)
= |\text{EC}_{t-1}|\, e^{-\Delta_t(o)}.
\end{equation}

\textit{Step 3 (Sequential composition over $T$ turns).}
Iterating~\eqref{eq:per-turn-shrink} over $t = 1, \ldots, T$ yields
\begin{equation}
|\text{EC}_T| \geq |\text{EC}_0| \prod_{t=1}^{T} e^{-\Delta_t(o)}
= |\text{EC}_0| \exp\!\Bigl(-\sum_{t=1}^{T} \Delta_t(o)\Bigr),
\end{equation}
which proves~\eqref{eq:composition-exp}. Budget enforcement $B_T(o) \leq \tau(o)$ immediately gives~\eqref{eq:composition-budget}.
\end{proof}

\noindent\textit{Interpretation.}
Once the budget is exhausted, the gatekeeper either withholds the object from the planner-visible abstraction or refuses further field disclosures. This yields graceful degradation: the planner may lose utility, but the equivalence class is prevented from shrinking beyond the budget-implied bound.

\noindent\textit{Empirical alignment.}
The multi-turn experiment (\S\ref{sec:eval:multiturn}) is consistent with this bound: after 58 tasks, NL reaches 1.0 while several objects are excluded by the gatekeeper, demonstrating hard enforcement of the budget cap. PQS degrades only modestly, indicating graceful degradation rather than catastrophic failure.

\noindent\textit{Corollary (product-form bound).}
Since $e^{-x} \geq 1-x$ for $x \geq 0$, we also obtain
\begin{equation}
|\text{EC}_T| \geq |\text{EC}_0| \prod_{t=1}^{T} \bigl(1-\Delta_t(o)\bigr),
\end{equation}
whenever each $\Delta_t(o)$ is normalized to lie in $[0,1]$. This product form is weaker but sometimes more intuitive.

\subsection{Assumptions and Empirical Alignment}

Table~\ref{tab:proof-empirical} summarizes how each formal result connects to the empirical evaluation and highlights the main modeling assumption behind the corresponding bound.

\begin{table}[h]
\centering
\caption{Mapping between formal security properties, assumptions, and empirical evidence.}
\label{tab:proof-empirical}
\resizebox{\columnwidth}{!}{%
\begin{tabular}{c|l|l|l}
\toprule
\multicolumn{1}{c}{\textbf{Thm.}} &
\multicolumn{1}{c}{\textbf{Property}} &
\multicolumn{1}{c}{\textbf{Key assumption}} &
\multicolumn{1}{c}{\textbf{Empirical evidence}} \\
\midrule
\ref{thm:anonymity} & $(k,\delta)$-anonymity & $\eta$-bounded side info & \S\ref{sec:eval:reid}: 94.1\% w/o budget \\
\ref{thm:unlinkability} & $\epsilon$-unlinkability & Uniform on support & $\epsilon \leq \tau(o)$ via budget \\
\ref{thm:composition} & Composition bound & Cost calibration $w_f \geq \ln b_f$ & \S\ref{sec:eval:multiturn}: NL${=}$1.0 \\
\bottomrule
\end{tabular}%
}
\end{table}

The formal guarantees rely on three modeling assumptions:
\begin{itemize}[nosep]
\item \textit{Bounded side information within equivalence classes} (Theorem~\ref{thm:anonymity}). The likelihood-ratio bound~\eqref{eq:eta-bounded-sideinfo} limits how strongly auxiliary knowledge can skew the posterior within an equivalence class. When $\eta = 1$, the classical $1/k$ guarantee is recovered exactly.

\item \textit{Uniform overlap-pattern distribution} (Theorem~\ref{thm:unlinkability}). Cross-session linkage reduces to distinguishing overlap attribute patterns once fresh IDs remove identifier-based matching. The uniform assumption yields the clean bound $\epsilon = \ln |\mathcal{Z}_U|$; skewed distributions can only worsen it.

\item \textit{Field-cost calibration} (Theorems~\ref{thm:unlinkability} and~\ref{thm:composition}). The condition $w_f \geq \ln b_f$ ensures that each field cost upper-bounds the logarithmic refinement induced by disclosing that field. This links the unlinkability and composition results to the single budget parameter $\tau(o)$.
\end{itemize}

\end{document}